\documentclass[twoside,leqno]{article}
\usepackage[letterpaper]{geometry}
\usepackage{siamproceedings}

\usepackage[T1]{fontenc}
\usepackage{amsfonts}
\usepackage{graphicx}
\usepackage{epstopdf}
\usepackage{enumitem}
\usepackage{algorithmicx}
\usepackage[indLines=true]{algpseudocodex}
\ifpdf
  \DeclareGraphicsExtensions{.eps,.pdf,.png,.jpg}
\else
  \DeclareGraphicsExtensions{.eps}
\fi


\newsiamremark{remark}{Remark}
\newsiamthm{problem}{Problem}



\usepackage{amsopn}


\usepackage{array}
\newcolumntype{?}{!{\vrule width 2pt}}
\usepackage{csquotes}
\usepackage{amsmath}
\usepackage{amsfonts}
\usepackage{amssymb}
\usepackage{faktor}
\usepackage{bm}
\usepackage{booktabs}
\usepackage[normalem]{ulem}
\usepackage{graphics}
\usepackage[clock]{ifsym}
\usepackage{subcaption}
\usepackage{thm-restate}
\usepackage{hyperref}
\usepackage{cleveref}
\usepackage{xcolor}

\useunder{\uline}{\ul}{}

\newcommand{\Q}{\mathbb{Q}}
\newcommand{\R}{\mathbb{R}}

\newcommand{\Z}{\mathbb{Z}}
\newcommand{\N}{\mathbb{N}}

\renewcommand{\O}{\mathcal{O}}
\newcommand{\C}{\mathbb C}
\newcommand{\LRS}[1]{{\boldsymbol{#1}}}

\newcommand{\poly}{\mathsf{poly}}

\title{On the Complexity of the Skolem Problem at Low Orders}
\date{}
\author{Piotr Bacik\thanks{Department of Computer Science, Oxford University, UK and Max Planck Institute for Software Systems, Saarland Informatics Campus, Germany.} \and Jo\"el Ouaknine\thanks{Max Planck Institute for Software Systems, Saarland Informatics Campus, Germany.} \and James Worrell\thanks{Department of Computer Science, Oxford University, UK.}}

\begin{document}

\maketitle

\begin{abstract}
The Skolem Problem asks to determine whether a given linear recurrence sequence (LRS) $\langle u_n \rangle_{n=0}^\infty$ over the integers 
has a zero term, that is, whether there exists $n$ such that  $u_n = 0$.  Decidability of the problem is open in general,  with the most 
notable positive result being a decision procedure for LRS of order at most 4.

In this paper we consider a bounded version of the Skolem Problem, in which the input consists of an LRS $\langle u_n \rangle_{n=0}^\infty$ 
and a bound $N \in \mathbb N$ (with all integers written in binary), and the task is to determine whether there exists $n\in\{0,\ldots,N\}$ 
such that $u_n=0$.  We give a randomised algorithm for this problem that, for all $d\in \mathbb N$, runs  in polynomial time on the class
of  LRS of order at most $d$. As a corollary we show that the (unrestricted) Skolem Problem for LRS of order at most 4 lies in \textsf{coRP}, improving 
the best previous upper bound of $\mathsf{NP}^{\mathsf{RP}}$.  

The running time of our algorithm is exponential in the order of the LRS---a dependence that appears necessary in view of the 
$\mathsf{NP}$-hardness of the Bounded Skolem Problem. However, even for LRS of a fixed order, the problem involves detecting zeros within an 
exponentially large range. For this, our algorithm relies on results from $p$-adic analysis to
isolate polynomially many candidate zeros and then test in randomised polynomial time whether each candidate is an actual zero
by reduction to arithmetic-circuit identity testing.
\end{abstract}

\section{Introduction}

A \emph{linear recurrence sequence} (LRS) $\LRS{u} = \langle u_n \rangle_{n=0}^\infty$ is a sequence of integers
satisfying a {linear recurrence relation}:
\begin{align}\label{eqn:LRS_def}
u_{n+d} = a_{d-1} u_{n+d-1} + \dots + a_0 u_n \qquad (n\in\mathbb N)\, ,
\end{align}
where $a_0, \dots , a_{d-1} \in \Z$, and $a_0 \neq 0$.  The \emph{order} of $\LRS{u}$ is the smallest $d$ such that 
$\LRS{u}$ satisfies a relation of the form \eqref{eqn:LRS_def}.  Even though
LRS are ubiquitous in computer science, mathematics, and beyond, the decidability of 
many basic computational problems about LRS remain open. The most famous of these is the Skolem Problem~\cite{SalomaaS78,Tao08}:
\begin{problem}[The Skolem Problem]
Given an LRS $\LRS{u}$, does there exist $n \in \N$ such that $u_n = 0$?
\end{problem}
There is substantial interest in the decidability and complexity of the Skolem Problem, being closely connected with many topics in theoretical computer science and other areas, including loop termination \cite{Ouaknine_loop2015,almagor_deciding_2021}, formal power series (e.g. \cite{berstel_noncommutative_2010}, Section 6.4), matrix semigroups \cite{bell_mortality_2021}, stochastic systems \cite{agrawal_approximate_2015,barthe_universal_2020}, and control theory \cite{blondel_survey_2000}. 

The most significant structural result concerning the zeros of an LRS is the celebrated Skolem-Mahler-Lech Theorem \cite{Skolem_SML,Mahler_SML,lech_note_1953}, which states that the set of zeros of an LRS is a union of a finite number of arithmetic progressions and a finite set. The statement may be refined via the concept of \emph{non-degeneracy}. Define the \emph{characteristic polynomial} of the recurrence~\eqref{eqn:LRS_def} to be
\begin{align}
    g(X) := X^d - a_{d-1}X^{d-1} - \dots - a_0 \, .
\end{align}
Let $\lambda_1, \dots, \lambda_s \in \mathbb{\overline{\mathbb Q}}$ be the distinct roots of $g$; these are called  the \emph{characteristic roots} of $\LRS{u}$. We say $\LRS{u}$ is \emph{non-degenerate} if no ratio $\lambda_i/\lambda_j$ of distinct characteristic roots is a root of unity. A given LRS can be effectively decomposed as the interleaving of finitely many identically zero or non-degenerate LRS~\cite[Theorem 1.6]{recurrence_sequences}. The core of the Skolem-Mahler-Lech Theorem is that a non-degenerate LRS that is not identically zero has finitely many zero terms. The Skolem Problem therefore reduces to determining whether a non-degenerate LRS has a zero term.

One can also formulate a function problem variant of the Skolem Problem; find all $n \in \N$ such that $u_n = 0$. It is folklore that the function problem reduces to the decision problem of whether there exists a zero. Indeed, given a non-degenerate LRS $\LRS{u}$, one checks whether a zero exists using an oracle for the decision problem. If a zero $n_0$ exists, then one can find it by enumerating integers until it is found, and then shift the sequence to consider $\langle u_{n+n_0+1} \rangle_{n=0}^\infty$. One then repeats the process until the oracle finds no more zeros exist. Unfortunately, this reduction yields no information about how the complexities of the problems are related.

To date, only partial decidability results are known, either by restricting the order of the LRS or by restricting the structure of the characteristic roots. The breakthrough papers~\cite{Mignotte_distance,Vereshchagin} achieve decidability for LRS up to order 4 by exhibiting an effective bound $N$ such that if an LRS $\LRS{u}$ of order at most 4 has a zero term then it has a zero term $u_n=0$ where $n<N$.  It is shown in \cite{Orbit_problem_2016} that these methods yield an $\mathsf{NP}^\mathsf{RP}$ complexity upper bound for the Skolem Problem on LRS of order at most 4: specifically \cite[Appendix C]{Orbit_problem_2016} shows that the threshold $N$ has magnitude exponential in the description length of the LRS and that the existence of $n<N$ such that $u_n=0$ can be determined in $\mathsf{NP}^\mathsf{RP}$.  A procedure to substantiate the latter bound involves guessing $n$ and building an arithmetic circuit representing $u_n$ and then checking zeroness of $u_n$ in randomised polynomial time. Indeed, this is an instance of $\mathsf{EqSLP}$, which is the complete class for the following problem: given a division-free straight line program (or equivalently an arithmetic circuit) that outputs an integer $M$, determine whether $M = 0$. It is known that $\mathsf{EqSLP} \subseteq \mathsf{coRP}$~\cite{Schonhage_1979}.

The above considerations lead us to define the following problem:
\begin{problem}[The Bounded Skolem Problem]
Given an LRS $\LRS{u}$ and an integer $N>0$, does there exist $n \in \{0,\ldots,N\}$ such that $u_n = 0$? 
\end{problem}
\noindent
One can of course also formulate the function problem version; find all $n \in \{0,\dots,N\}$ such that $u_n = 0$. We can then reformulate the contribution of \cite{Orbit_problem_2016} as showing that the
Bounded Skolem Problem has complexity in $\mathsf{NP}^\mathsf{RP}$
and that the Skolem Problem for LRS of order 4 reduces in polynomial time to the bounded version.

In this paper we show that for each fixed $d\in \mathbb N$ the Bounded Skolem Problem for LRS of order at most $d$ lies in $\mathsf{coRP}$.  As a corollary we show that the Skolem Problem for LRS of order at most~4 also lies in $\mathsf{coRP}$, improving the previous bound 
of $\mathsf{NP}^\mathsf{RP}$.
Since the threshold $N$ in the Bounded Skolem Problem is encoded in binary, the range $\{0,\ldots,N\}$ in which a zero is sought has cardinality exponential in the size of the input.  Nevertheless, we give a randomised algorithm that can detect the existence of a zero in polynomial time for LRS of a fixed order.  The running time of our algorithm is exponential in the order of the LRS\@. The exponential dependence on the order seems unavoidable, since the $\mathsf{NP}$-hardness proof of the Skolem Problem straightforwardly carries over to its bounded version, although it no longer applies if the order of the LRS is fixed~\cite{Akshay2017,Blondel2002}.\footnote{The \textsf{NP}-hardness 
proof for the Skolem Problem in~\cite[Theorem 6]{Akshay2017} is by  reduction from Subset Sum.
The reduction maps an instance of Subset Sum to an
LRS that has a zero iff
the instance of Subset Sum has a solution.   The constructed LRS has period
the product $N:=p_1\cdots p_m$ of the first $m$ primes, where $m$ is the number of integers in the instance of Subset Sum.
Evidently the same reduction works also for the Bounded Skolem Problem.}

Our algorithm for the Bounded Skolem Problem is based on a $p$-adic approach that was introduced by Skolem in his original proof of the Skolem-Mahler-Lech Theorem.  This involves extending an integer LRS $\boldsymbol u = \langle u_n\rangle_{n=0}^\infty$ to a $p$-adic analytic function $F:\mathcal O_p\rightarrow \mathcal O_p$, for a suitable prime $p$, such that $F(n)=u_n$ for all $n\in \mathbb N$.  Here, $\mathcal O_p$ is the valuation ring of the field $\mathbb C_p$ which is the $p$-adic analogue of the field of complex numbers.

As we explain below,
while it is not known how to decide whether an LRS has an integer zero, one can decide in polynomial time whether its extension $F$ has a zero in $\mathcal O_p$. 
More generally, by considering zeros of the power series $F(p^r x + m)$ for given $r\in \mathbb N$ and $m \in \{0,\ldots,p^r-1\}$, 
one can determine whether $F$ has a zero $a \in \mathcal O_p$ such that $a\equiv m\bmod{p^r}$.

Our procedure performs a depth-first search of all residues $m \in \{0,\ldots,p^r-1\}$, for successively larger $r$, such that $F$ has a zero in $\mathcal O_p$ whose residue modulo $p^r$ is equal to $m$.
Assume without loss of generality that the input $N$ to the Bounded Skolem Problem has the form  $N=p^n-1$.  We call each residue $m\in\{0,\ldots,N\}$ for which
there exists $a \in \mathcal O_p$ with $F(a)=0$ and $a\equiv m\bmod (N+1)$
 a \emph{candidate zero} of $\boldsymbol u$.  Every integer zero of $\boldsymbol u$ in the interval $\{0,\ldots,N\}$ is a candidate zero and, conversely, every candidate zero
 is the residue modulo $N+1$ of a zero of $F$ in $\mathcal O_p$.  For each candidate zero $m\in \{0,\ldots,N\}$, we determine whether $u_m$ is zero in randomised polynomial time by
 constructing an arithmetic circuit representing $u_m$, using iterated matrix powering, and checking the circuit for zeroness using standard methods in arithmetic circuit identity testing.

The overall polynomial-time bound of the procedure (for each fixed recurrence depth $d$) crucially relies on a quantitative refinement of Skolem's proof, due to Van der Poorten and Schlickewei~\cite{van_der_poorten_1991}.
We use this result in two different ways.  First it gives an upper bound on the number of zeros of $F$ in $\mathcal O_p$, which translates in our setting to a polynomial bound on the number of candidate zeros that our algorithm examines.  In other words, we need only search a small fragment of the exponentially large search tree of residues modulo $N+1$.  Secondly, we use the result of \cite{van_der_poorten_1991} in combination with the Weierstrass Preparation Theorem  (which characterises the number of roots of a power series in terms of the absolute values of its coefficients) to decide in polynomial time whether the power series $F$ has a zero in $\mathcal O_p$.  The key issue in this case is to obtain a polynomial bound on the number of coefficients of a power series that must be examined to determine whether it has a root in $\mathcal O_p$.

Although the correctness of our algorithm is based on $p$-adic techniques, the implementation works with the underlying LRS and does not directly compute with $p$-adic numbers.  This is because we work with the so-called Mahler-series representation of $p$-adic power series, which are based on the Mahler polynomials $\binom{x}{k} = \frac{x(x-1)\cdots(x-k+1)}{k!}$ for $k\in\mathbb N$.  Conveniently for our purposes, the interpolation $F$ of an LRS has a Mahler series expansion $F(x) = \sum_{k=0}^\infty \beta_k \binom{x}{k}$ whose coefficients $\beta_k$ lie in $\mathbb Z$ and can, moreover, be directly calculated from the LRS by a simple formula involving the discrete difference operator.  The critical information for applying the $p$-adic Weierstrass Preparation Theorem for locating the roots of $F$ can be extracted from its Mahler series expansion.

\section{Preliminaries} 
\label{sec:prelims}
\subsection{Linear recurrence sequences}
We recall some basic notions about linear recurrence sequences; more details may be found in \cite{recurrence_sequences}. 
We also establish some notation. Let $\LRS{u}$ be an LRS of order $d$ satisfying the relation \eqref{eqn:LRS_def}.
The \emph{companion matrix} of $\LRS{u}$ is defined to be
\begin{align*}
 A = \begin{pmatrix} a_{d-1} & \dots & a_1 & a_0 \\ 1 & \dots & 0 & 0 \\ \vdots & \ddots & \vdots & 0 \\ 0 & \dots  & 1 & 0  \end{pmatrix} \, .
\end{align*}
Writing $\alpha = \begin{pmatrix} 0 & 0 \dots & 1 \end{pmatrix}$, $\beta = \begin{pmatrix} u_{d-1} & u_{d-2} \dots & u_0 \end{pmatrix}^T$, we get $u_n = \alpha A^n \beta$. 
We denote by $\|\LRS{u}\|$ the \emph{size} of the representation of $\LRS{u}$, defined as
\begin{align*}
\|\LRS{u}\| = \sum_{i=0}^{d-1} (\|a_i\| + \|u_i\|)
\end{align*}
where $\|y\|$ denotes the number of bits needed to represent the integer $y$. Throughout the paper, let $\poly(\|\LRS{u}\|)$ denote any quantity that is bounded above by $\|\LRS{u}\|^{O(1)}$.
\subsection{\texorpdfstring{$p$}{p}-adic numbers}
We briefly recall relevant notions about $p$-adic numbers.  We refer to~\cite{robert_course_2000,gouvea_p-adic_2020} for more details.
Given a prime number $p$, every non-zero rational number $x \in \Q$ may be written as $x = \frac{a}{b}p^r$ for some integers $a,b$ coprime to $p$, with $b$ non-zero, and $r \in \Z$. We define the valuation $v_p(x) = r$, and $v_p(0) = \infty$.  From this derivation we see that $v_p$ satisfies the ultrametric inequality
$v_p(x+y) \geq \min(v_p(x),v_p(y))$ for all $x,y \in \mathbb Q$.

We define an absolute value on $\Q$ by $|x|_p = p^{-v_p(x)}$.  The ultrametric inequality on $v_p$ translates to the 
strong triangle inequality: $|x+y|_p \leq \max(|x|_p,|y|_p)$ for all $x,y \in \mathbb Q$.
In the same way as one  completes $\Q$ with respect to the standard absolute value $|\cdot|$ to get $\R$, one may complete $\Q$ with respect to $|\cdot|_p$ to get the $p$-adic numbers $\Q_p$. If one completes $\Z$ with respect to $|\cdot|_p$ then one gets the $p$-adic integers $\Z_p$, which may also be defined as the unit disc in $\Q_p$:
\begin{align*}
\Z_p = \{x \in \Q_p : |x|_p \leq 1\} = \{x \in \Q_p : v_p(x) \geq 0 \} \, .
\end{align*}

The absolute value $|\cdot|_p$ on $\mathbb Q_p$ extends uniquely to an absolute value on the algebraic closure $\overline{\mathbb Q_p}$.  Specifically, given $\alpha \in \overline{\mathbb Q_p}$, we define $|\alpha|_p := |N(\alpha)|_p^{1/n}$, where $n$ is the degree of $\alpha$ over $\mathbb Q_p$ and $N(\alpha)$ is the \emph{norm} of $\alpha$, that is the determinant of the $\mathbb Q_p$-linear transformation 
$\mu_\alpha : \mathbb Q_p(\alpha)\rightarrow \mathbb Q_p(\alpha)$ given by $\mu_\alpha(x)=\alpha x$.
We correspondingly extend the valuation $v_p(\cdot)$ to $\overline{\mathbb Q_p}$ by $v_p(\alpha) := \frac{1}{n}v_p(N(\alpha))$.  With these definitions the identity 
$|\alpha|_p=p^{-v_p(\alpha)}$ holds for all $\alpha \in \overline{\mathbb Q_p}$.  While $\overline{\mathbb Q_p}$ is not complete with respect to $|\cdot|_p$, taking the Cauchy completion we obtain a field $\mathbb C_p$ that is both algebraically closed and complete under the extension of $|\cdot|_p$ to $\mathbb C_p$.

Define the closed disc $\overline D(z,r)$ centred at $z \in \mathbb C_p$ with valuation $r$ to be
\begin{align} \label{eqn:disc_def}
\overline D(z,r) = \{x \in \C_p : v_p(x-z) \geq r \} = \{x \in \C_p : |x-z|_p \leq p^{-r}\}
\end{align}
and denote the unit disc of $\C_p$ by \[\O_p = \overline D(0,0)=\left\{x\in\C_p:v_p(x)\geq 0\right\} = \{x \in \C_p : |x|_p \leq 1\}\, .\]  Note that $\mathcal O_p$ is a subring of $\C_p$.
\subsection{\texorpdfstring{$p$}{p}-adic power series}
Let $\langle b_j\rangle_{j=0}^\infty$ be a sequence in $\mathbb Z_p$.  A sufficient and necessary condition for the power series $F(x)=\sum_{j=0}^\infty b_j x^j$
to converge on $\mathbb Z_p$ is that $v_p(b_j) \rightarrow \infty$ as $j\rightarrow \infty$.  A \emph{$p$-adic analytic function} $F:\mathbb Z_p\rightarrow \mathbb Z_p$
is one that is given by a convergent power series.  For a $p$-adic analytic function $F$ and $\alpha \in \mathcal O_p$ it is the case that $F(\alpha)$ converges to an element 
of $\mathcal O_p$.  Thus we may refer to the zeros of $F$ in $\mathcal O_p$.

The following theorem, found in \cite[Theorem 7.2.6]{gouvea_p-adic_2020}, is fundamental to our approach.

\begin{theorem}[$p$-adic Weierstrass Preparation Theorem] \label{thm:Weierstrass}
Let $F(x) = \sum_{j=0}^\infty b_j x^j \in \Z_p [\![x]\!]$ be a non-zero convergent power series. Suppose $j_0$ is an integer such that
\begin{align*}
v_p(b_{j_0}) = \underset{j} \min\, {v_p(b_j)} \hspace{1cm} \text{and} \hspace{1cm} v_p(b_{j_0}) < v_p(b_j) \hspace{0.5cm}  \forall j > j_0 \, .
\end{align*}
Then there is a polynomial $g(x)$ of degree $j_0$ and a power series $h(x)$ with no zeros in $\O_p$ such that
\begin{align*}
    F(x) = g(x) h(x) \, .
\end{align*}
In particular, $F$ has exactly $j_0$ zeros in $\O_p$, counting multiplicity.
\end{theorem}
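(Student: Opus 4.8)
The plan is to deduce the statement from the stronger \emph{Weierstrass division theorem}, so let me first fix the setting. I work in the ring $A$ of power series over $\mathbb C_p$ that converge on $\mathcal O_p$, equipped with the Gauss valuation $w(H):=\min_j v_p(h_j)$ for $H=\sum_j h_j x^j$; convergence means $v_p(h_j)\to\infty$, so the minimum is attained and $A$ is complete for $w$. I will use the classical fact that $w$ is additive, $w(H_1H_2)=w(H_1)+w(H_2)$ (the non-archimedean Gauss lemma), and I call the largest index realising the minimum $w(H)$ the \emph{Weierstrass degree} $\delta(H)$. Reducing coefficients modulo the maximal ideal sends a series of Gauss valuation $0$ to a \emph{polynomial} over the residue field (since $v_p(h_j)\to\infty$ kills all but finitely many $\bar h_j$), under which $\delta$ becomes the ordinary polynomial degree; together with additivity of $w$ this makes $\delta$ additive, $\delta(H_1H_2)=\delta(H_1)+\delta(H_2)$. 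First I would normalise: dividing $F$ by $p^{v_p(b_{j_0})}$ rescales all coefficients, leaves the zeros of $F$ in $\mathcal O_p$ and the index $j_0$ unchanged, and achieves $w(F)=0$. The hypotheses then read $v_p(b_{j_0})=0$ (so $b_{j_0}$ is a unit), $v_p(b_j)\ge 0$ for all $j$, and $v_p(b_j)\ge 1$ for $j>j_0$, whence $\delta(F)=j_0$ and the reduction $\bar F$ is a polynomial of degree exactly $j_0$.

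The core step is the division theorem: for every $G\in A$ there exist a unique $q\in A$ and a unique polynomial $r$ of degree $<j_0$ with $G=qF+r$. I would prove this by showing that the $\mathbb C_p$-linear map $\Psi(q,r)=qF+r$, from the product of $A$ with the space of polynomials of degree $<j_0$ to $A$, is an isometric bijection for the valuation $\min(w(q),w(r))$ on the source. The lower bound, namely that $w(qF+r)\le\min(w(q),w(r))$ is in fact an \emph{equality}, is the crux and is where I expect the real work: after rescaling so that $\min(w(q),w(r))=0$, reduce modulo $p$ to obtain $\bar q\,\bar F+\bar r=0$ in the polynomial ring over the residue field, with $\bar F$ of degree exactly $j_0$ and $\deg\bar r<j_0$; comparing degrees forces $\bar q=\bar r=0$, i.e.\ $w(q),w(r)>0$, contradicting the normalisation. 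Surjectivity then follows from completeness by successive approximation: over the residue field, division by the degree-$j_0$ polynomial $\bar F$ is ordinary Euclidean division and hence surjective, so one lifts a solution modulo $p$, subtracts, and iterates; the residual error gains at least $v_p(p)=1$ in Gauss valuation at each stage, and the series of corrections converges in the complete ring $A$. Run over $\Z_p$ (whose residue field $\mathbb F_p$ still sees $\bar F$ as a degree-$j_0$ polynomial with a unit leading coefficient), the same argument keeps $q$ and $r$ with coefficients in $\Z_p$.

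Finally I would read off the factorisation and the zero count. Dividing $G=x^{j_0}$ by $F$ gives $x^{j_0}=qF+r$ with $\deg r<j_0$ and $q,r$ over $\Z_p$, so that $g:=x^{j_0}-r$ is a monic polynomial of degree $j_0$ in $\Z_p[x]$ with $g=qF$. The isometry yields $w(r)\ge 0$, hence $w(g)=0$ and $\delta(g)=j_0$, since the leading coefficient $1$ realises the minimum valuation at the largest index. Additivity of $\delta$ applied to $g=qF$ gives $j_0=\delta(q)+\delta(F)=\delta(q)+j_0$, so $\delta(q)=0$; a series of Gauss valuation $0$ and Weierstrass degree $0$ has its constant term strictly dominating all others, and is therefore a unit of $A$ (write $q=q_0(1+w_0)$ with $w(w_0)>0$ and invert by a geometric series). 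Thus $h:=q^{-1}$ is a unit with $F=gh$, and $h$ has no zeros in $\mathcal O_p$ because $h(\alpha)\,h^{-1}(\alpha)=1$ forbids $h(\alpha)=0$. For the count, $h$ contributes no zeros, while $g$ is monic of degree $j_0$ with coefficients in $\mathcal O_p$, so all $j_0$ of its roots in the algebraically closed field $\mathbb C_p$ are integral over the valuation ring $\mathcal O_p$ and therefore lie in $\mathcal O_p$. Hence $F=gh$ has exactly $j_0$ zeros in $\mathcal O_p$, counted with multiplicity.
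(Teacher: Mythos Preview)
The paper does not actually prove this theorem; it is quoted as a classical result and used as a black box, so there is no ``paper's own proof'' to compare against. Your argument via the Weierstrass division theorem is correct and is the standard textbook route (as in Bosch--G\"untzer--Remmert or Robert, \emph{A Course in $p$-adic Analysis}, \S VI.2): the Gauss-norm multiplicativity, the isometric bijectivity of $(q,r)\mapsto qF+r$ established by reducing modulo the maximal ideal to Euclidean division by the degree-$j_0$ polynomial $\bar F$ and then lifting by successive approximation in the complete ring, the extraction of the monic factor $g=x^{j_0}-r$ by dividing $x^{j_0}$, and the unit property of $q$ from $\delta(q)=0$ are all sound. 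The zero count at the end is also fine: $g$ is monic over $\mathcal O_p$, so its roots are integral over the valuation ring $\mathcal O_p$, which is integrally closed, whence all $j_0$ roots lie in $\mathcal O_p$; and a unit of $A$ cannot vanish at any point of $\mathcal O_p$. One small cosmetic point: after your initial normalisation you should remark that the factor $p^{v_p(b_{j_0})}$ is absorbed into $h$ (not $g$) when you undo the rescaling, so that $g$ remains a genuine degree-$j_0$ polynomial and $h$ still has no zeros, as the statement requires.
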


To facilitate the application of this theorem, we formulate a further definition and corollary.
\begin{definition}
Let $F$ and $j_0$ be as in Theorem \ref{thm:Weierstrass}. Define $V(F) := \underset{j} \min \, v_p(b_j) = v_p(b_{j_0})$, and write $j(F) := j_0$.
\end{definition}
Equivalently, $j(F)$ is the largest index of a power series coefficient of $F$ with minimal valuation among all coefficients, and $V(F)$ is this minimal valuation.
\begin{corollary} \label{cor:Weierstrass}
Let $F$ be as in Theorem \ref{thm:Weierstrass}. Define $h_{z,r}(x) = F(p^rx + z)$. Then $F$ has exactly $j(h_{z,r})$ zeros in the disc $\overline D(z,r)$.
\end{corollary}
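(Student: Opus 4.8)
The plan is to reduce counting the zeros of $F$ in the disc $\overline D(z,r)$ to counting the zeros of $h_{z,r}$ in the unit disc $\O_p$, and then to invoke Theorem~\ref{thm:Weierstrass} directly for $h_{z,r}$. The engine of the argument is that the affine substitution $\phi(w)=p^rw+z$ carries $\O_p$ bijectively onto $\overline D(z,r)$ and intertwines the two functions, namely $h_{z,r}=F\circ\phi$.

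First I would check that $h_{z,r}(x)=F(p^rx+z)$ is genuinely a non-zero convergent power series over the valuation ring, so that Theorem~\ref{thm:Weierstrass} and the quantities $j(h_{z,r}),V(h_{z,r})$ are defined. I would assume the standing requirements $z\in\O_p$ and $r\ge 0$, which guarantee $\overline D(z,r)\subseteq\O_p$ via the strong triangle inequality. Formally expanding $\sum_j b_j(p^rx+z)^j$ and collecting powers of $x$ gives the coefficient of $x^k$ as $c_k=\sum_{j\ge k}\binom{j}{k}b_jz^{j-k}p^{rk}$; each such series converges because $\binom{j}{k}\in\Z$, $z^{j-k}\in\O_p$, $b_j\in\Z_p$, and $v_p(b_j)\to\infty$, and the rearrangement of the double sum is legitimate in the nonarchimedean setting. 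Since $v_p(c_k)\ge rk+\min_{j\ge k}v_p(b_j)\to\infty$, the series $h_{z,r}$ converges on $\O_p$, its coefficients lie in $\O_p$, and the proof of Theorem~\ref{thm:Weierstrass} applies verbatim over $\O_p$; moreover $h_{z,r}\ne 0$ because $\phi$ is invertible and $F\ne 0$. I note that the hypothesis of Theorem~\ref{thm:Weierstrass} --- existence of the index $j_0$ --- is automatic for any non-zero convergent series, since $v_p(\cdot)\to\infty$ forces the minimal coefficient valuation to be attained by only finitely many indices, whence $j(h_{z,r})$ is simply the largest such index.

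Next I would establish the bijection of zero sets, with multiplicities. For $w\in\C_p$ one has $v_p\big((p^rw+z)-z\big)=r+v_p(w)$, so $\phi$ maps $\O_p=\{v_p\ge 0\}$ bijectively onto $\overline D(z,r)=\{x:v_p(x-z)\ge r\}$, with inverse $a\mapsto(a-z)/p^r$. Since $h_{z,r}(w)=F(\phi(w))$, a point $a\in\overline D(z,r)$ is a zero of $F$ exactly when $w=(a-z)/p^r\in\O_p$ is a zero of $h_{z,r}$; and because $\phi$ is an invertible affine change of variable, the order of vanishing is preserved, so the multiplicities agree. Hence the number of zeros of $F$ in $\overline D(z,r)$, counted with multiplicity, equals the number of zeros of $h_{z,r}$ in $\O_p$, which by Theorem~\ref{thm:Weierstrass} applied to $h_{z,r}$ is exactly $j(h_{z,r})$.

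I expect the only real obstacle to be the bookkeeping in the second paragraph: verifying that the formal composition $F(p^rx+z)$ really defines an element of $\O_p[\![x]\!]$ with well-defined coefficients, i.e. justifying the interchange of summation and the convergence of each coefficient series. Everything else --- the bijection $\phi$ and the invariance of the order of vanishing under an affine substitution --- is routine once this convergence is in hand.
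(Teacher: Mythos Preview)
Your proposal is correct and follows exactly the same route as the paper: apply Theorem~\ref{thm:Weierstrass} to $h_{z,r}$ and transport the zero count through the affine bijection $w\mapsto p^rw+z$ from $\O_p$ onto $\overline D(z,r)$. The paper's proof is two sentences and leaves implicit the verifications you spell out (that $h_{z,r}$ is a non-zero convergent series over $\O_p$ and that multiplicities are preserved under the affine change of variable), but there is no difference in strategy.
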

\begin{proof}
By Theorem \ref{thm:Weierstrass}, $h_{z,r}$ has exactly $j(h_{z,r})$ zeros in $\O_p$. Note that $x \in \O_p = \overline D(0,0)$  if and only if $p^r x + z \in \overline D(z,r)$, so $F$ has exactly $j(h_{z,r})$ zeros in the disc $\overline  D(z,r)$.
\end{proof}

\subsection{Mahler Series}
The computation of $j(F)$ for $p$-adic analytic functions $F$ arising from LRS plays a critical role in our zero-finding procedure.
The following two sections show how this can be done directly from the LRS, without the need to compute the coefficients of $F$.
For this we need some background on Mahler series. The following may be found in \cite[Section IV 2.3-2.4, Section VI 4.7]{robert_course_2000}.

Define the difference operator $\Delta$ by $(\Delta F)(x) = F(x+1)-F(x)$ 
for any function $F : \Z_p \rightarrow \Z_p$.
For $k\geq 0$, consider the polynomial
\[\binom{x}{k}:=\frac{x(x-1)\cdots(x-k+1)}{k!}\, . \] 
Since $\binom{n}{k} \in \mathbb Z$ for all $n \in \mathbb Z$ and  $\mathbb Z$ is dense in $\Z_p$ with respect to the 
topology induced by $|\cdot|_p$,
it follows that $\binom{x}{k}$ maps $\mathbb Z_p$ to $\mathbb Z_p$.
A \emph{Mahler series} is a formal series 
\begin{gather}
    F(x) = \sum_{k=0}^\infty \beta_k \binom{x}{k} \, ,
\label{eq:mahler}
\end{gather} 
where $\beta_k \in \Z_p$ is such that $v_p(\beta_k)\rightarrow \infty$
as $k\rightarrow \infty$.  Such a series defines a 
continuous function $F : \mathbb Z_p \rightarrow \mathbb Z_p$ and, conversely,
every continuous  function $F : \mathbb Z_p \rightarrow \mathbb Z_p$ can be represented by a Mahler series 
of the form~\eqref{eq:mahler} in which 
\begin{gather}
    \beta_k = (\Delta^k F)(0) = \sum_{j=0}^k (-1)^{k-j} \binom{k}{j} F(j) \, .
\label{eq:deriv-formula}
\end{gather} 
Under the further assumption that $v_p(\beta_k/k!) \to \infty$ as $k \to \infty$, the Mahler series~\eqref{eq:mahler}
defines an analytic function $F : \mathbb Z_p \rightarrow \mathbb Z_p$, that is,
$F$ admits a power series expansion
$F(x) = \sum_{j=0}^\infty b_j x^j$ where $b_j \in \mathbb Q_p$ for all $j$ and $v_p(b_j) \rightarrow \infty$. 

The following proposition characterises $j(F)$ for an analytic function $F$ in terms of its Mahler series.

\begin{proposition} \label{prop:coef_to_mahler}
Suppose $F : \Z_p \to \Z_p$ is an analytic function whose respective power series and Mahler series representations are:
\begin{align} \label{eqn:Mahler_and_power}
F(x) = \sum_{j=0}^\infty b_j x^j = \sum_{k=0}^\infty \beta_k \binom{x}{k} \,
\end{align}
Define $V'(F) := \underset{k \geq 0} \min\, v_p(\beta_k/k!)$ and define $k(F)$ to be the  largest integer $k$ such that $v_p(\beta_k/k!) = V'(F)$. Then $V(F) = V'(F)$ and $ j(F) = k(F) $.
\end{proposition}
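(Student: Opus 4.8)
The plan is to pass between the monomial basis $\{x^j\}$ and the binomial basis $\left\{\binom{x}{k}\right\}$ using Stirling numbers, and then read off the valuations termwise. Recall the falling factorial $x^{\underline k}=x(x-1)\cdots(x-k+1)=\sum_{j=0}^k s(k,j)\,x^j$, where the $s(k,j)\in\Z$ are the signed Stirling numbers of the first kind with $s(k,k)=1$, and dually $x^j=\sum_{k=0}^j S(j,k)\,x^{\underline k}$, where the $S(j,k)\in\Z$ are the Stirling numbers of the second kind with $S(j,j)=1$. Since $\binom{x}{k}=x^{\underline k}/k!$, substituting these into the two expansions in \eqref{eqn:Mahler_and_power} and collecting coefficients yields
\begin{align*}
b_j=\sum_{k\geq j}\frac{\beta_k}{k!}\,s(k,j)\qquad\text{and}\qquad \frac{\beta_k}{k!}=\sum_{j\geq k} b_j\,S(j,k)\, .
\end{align*}
First I would check that these rearrangements are legitimate: since $F$ is analytic we have $v_p(\beta_k/k!)\to\infty$, and since $F$ converges on $\Z_p$ we have $v_p(b_j)\to\infty$; as the Stirling numbers are integers, hence of nonnegative valuation, each of the single series above converges in the ultrametric sense and the underlying double series may be reindexed.

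The equality $V(F)=V'(F)$ then follows from two applications of the strong triangle inequality. From $\frac{\beta_k}{k!}=\sum_{j\geq k} b_j S(j,k)$ and $v_p(S(j,k))\geq 0$ we get $v_p(\beta_k/k!)\geq\min_{j\geq k}v_p(b_j)\geq V(F)$ for every $k$, whence $V'(F)\geq V(F)$; symmetrically, from $b_j=\sum_{k\geq j}\frac{\beta_k}{k!}s(k,j)$ and $v_p(s(k,j))\geq 0$ we get $v_p(b_j)\geq V'(F)$ for every $j$, whence $V(F)\geq V'(F)$. Combining gives $V(F)=V'(F)=:V$.

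For $j(F)=k(F)$ the crucial feature is the triangularity of the two transition matrices, namely $s(j,j)=S(k,k)=1$, together with the fact that $b_j,\beta_k\in\Z_p$ have integer valuation, so that the strict inequality $v_p>V$ is equivalent to $v_p\geq V+1$. Write $J:=j(F)$ and $K:=k(F)$; by definition $v_p(b_j)>V$ for all $j>J$, and $v_p(\beta_k/k!)>V$ for all $k>K$. To see $K\leq J$, suppose $K>J$; then in $\frac{\beta_K}{K!}=\sum_{j\geq K}b_j S(j,K)$ every index $j\geq K$ exceeds $J$, so every term has valuation $\geq V+1$ and hence $v_p(\beta_K/K!)>V$, contradicting $v_p(\beta_K/K!)=V'=V$. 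The reverse inequality $J\leq K$ is entirely symmetric, using $b_J=\sum_{k\geq J}\frac{\beta_k}{k!}s(k,J)$: if $J>K$ then every $k\geq J$ exceeds $K$, forcing $v_p(b_J)>V$, a contradiction. Hence $J=K$, i.e.\ $j(F)=k(F)$.

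I expect the only real subtlety to be the convergence and reindexing of the double series, and in particular the fact that an infinite ultrametric sum all of whose terms have valuation $>V$ again has valuation $>V$. This step is where the integrality of the valuation on $\Z_p$ is essential: it upgrades the strict inequality $>V$ to a uniform gap $\geq V+1$, so that the infimum of the term valuations (which is in fact attained, since $v_p(b_j)\to\infty$ and $v_p(\beta_k/k!)\to\infty$) stays bounded away from $V$. Once this point is secured, the Stirling-number bookkeeping and the two triangular-domination arguments are routine.
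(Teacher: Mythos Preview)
Your proof is correct and follows essentially the same approach as the paper: both pass between the monomial and binomial bases via the Stirling-number transition matrices and exploit their unitriangularity together with the ultrametric inequality. The paper's argument is slightly more compressed: rather than first proving $V(F)=V'(F)$ and then separately arguing $j(F)=k(F)$ by contradiction, it evaluates the two transition formulas at the specific indices $j(F)$ and $k(F)$, observing that the leading diagonal term strictly dominates all higher terms, so that $v_p(\beta_{j(F)}/j(F)!)=v_p(b_{j(F)})$ and $v_p(b_{k(F)})=v_p(\beta_{k(F)}/k(F)!)$, from which both equalities follow at once. Your more explicit treatment of the convergence/reindexing and of why a strict termwise inequality $>V$ survives the infinite sum is a welcome addition that the paper leaves implicit.
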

\begin{proof}
We have 
\begin{align} \label{eqn:basis_change}
k!\binom{x}{k} = \sum_{j=0}^k (-1)^{k-j} \begin{bmatrix} k \\ j\end{bmatrix}x^j \hspace{5mm} \text{and} \hspace{5mm} x^j = \sum_{k=0}^j \begin{Bmatrix} j \\ k \end{Bmatrix} k!\binom{x}{k}
\end{align}
where $\begin{bmatrix} k \\ j \end{bmatrix}$ and $\begin{Bmatrix} j \\ k \end{Bmatrix}$ are positive integers known as the Stirling numbers of the first and second kind respectively. 
By combining \eqref{eqn:basis_change} with \eqref{eqn:Mahler_and_power} we see that
\begin{align*}
b_j = \sum_{k=j}^\infty (-1)^{k-j} \frac{\beta_k}{k!} \begin{bmatrix} k \\ j \end{bmatrix} \hspace{5mm} \text{and} \hspace{5mm} \frac{\beta_k}{k!} = \sum_{j=k}^\infty b_j \begin{Bmatrix} j \\ k \end{Bmatrix}
\end{align*}
which gives
\begin{align*}
    V(F) = v_p(b_{j(F)}) = v_p \left(\frac{\beta_{j(F)}}{j(F)!}\right) \geq V'(F)
\end{align*}
and
\begin{align*}
 V'(F) = v_p \left( \frac{\beta_{ k(F)}}{ k(F)!} \right) = v_p(b_{k(F)}) \geq V(F) \, .
\end{align*}
The combination of these inequalities give $V(F) = V'(F)$ and $j(F) = k(F)$. 

\end{proof}

\subsection{
Interpolation of Linear Recurrence Sequences}
\label{sec:interpolate}
It is well known that for a suitable prime $p$,
a linear recurrence sequence admits a decomposition into finitely many subsequences such that each subsequence 
can be interpolated to a $p$-adic analytic function.  This fact is the basis of the Skolem's original proof of the Skolem-Mahler-Lech theorem.  
In this section, we give a self-contained account of this construction.  Furthermore,
for each of the $p$-adic analytic functions $F$ that arise from the interpolation of an LRS we give a characterisation of $j(F)$ in terms of the
first $d-1$ discrete derivatives of the LRS at zero, where $d$ is the order of the recurrence.
In the process, we recover a result of \cite[Lemma 2]{van_der_poorten_1991} that $j(F)\leq d-1$ 
for $p$ sufficiently large relative to the order~$d$ of the recurrence:

\begin{theorem}
  Let $\LRS{u}$ satisfy the order-$d$ linear recurrence~\eqref{eqn:LRS_def}
  and let $p > d+1$ be a prime not dividing the constant term $a_0$ of the recurrence.  Then there exists $M < p^{d^2}$ and analytic functions $F_\ell : \Z_p\rightarrow \Z_p$,
$\ell \in \{0,\ldots,M-1\}$, such that for all $\ell$ we have 
\begin{enumerate}
    \item $F_{\ell}(n) = u_{Mn+\ell}$ for all $n\in \mathbb N$;
    \item $V(F_\ell) = \min \{ v_p((\Delta^j F_\ell)(0)) : 0 \leq j \leq d-1\}$;
    \item $j(F_\ell) =\max \{ j \in \{0,\ldots,d-1\} : v_p((\Delta^j F_\ell)(0)) = V(F_\ell)\}$.
\end{enumerate}
\label{thm:interpolate}
\end{theorem}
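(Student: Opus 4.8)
The plan is to interpolate each residue subsequence $\langle u_{Mn+\ell}\rangle_n$ by the $p$-adic binomial expansion of a matrix power, to read off its Mahler coefficients in closed form, and then to locate the coefficient of minimal normalised valuation using the integrality of the underlying recurrence together with Legendre's formula for $v_p(k!)$. Since $\det A = \pm a_0$ is a unit modulo $p$, the reduction of the companion matrix $A$ lies in $\mathrm{GL}_d(\mathbb F_p)$, and I take $M$ to be its multiplicative order. As $M$ divides $|\mathrm{GL}_d(\mathbb F_p)| = \prod_{i=0}^{d-1}(p^d - p^i) < p^{d^2}$, the required bound $M < p^{d^2}$ is immediate, and by choice of $M$ we may write $A^M = I + pB$ for some integer matrix $B$.

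Using $u_n = \alpha A^n \beta$, I define $F_\ell(x) := \alpha A^\ell (I+pB)^x \beta$, where $(I+pB)^x := \sum_{k=0}^\infty \binom{x}{k}(pB)^k$ is the $p$-adic binomial series. This is visibly a Mahler series $\sum_k \beta_k \binom xk$ whose coefficients are $\beta_k = p^k\,\alpha A^\ell B^k \beta$; setting $\gamma_k := \alpha A^\ell B^k\beta \in \Z$ gives $\beta_k = p^k\gamma_k$ and hence $v_p(\beta_k)\ge k$. Because $v_p(k!) \le k/(p-1)$, we get $v_p(\beta_k/k!) \ge k(p-2)/(p-1)\to\infty$, so $F_\ell$ is analytic by the criterion stated before Proposition~\ref{prop:coef_to_mahler}. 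Statement~(1) is then immediate, as the binomial series terminates for $n\in\N$ and $F_\ell(n) = \alpha A^\ell (A^M)^n\beta = u_{Mn+\ell}$.

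For (2) and (3) I use \eqref{eq:deriv-formula} to identify the discrete derivatives with the Mahler coefficients, $(\Delta^k F_\ell)(0) = \beta_k = p^k\gamma_k$; by Proposition~\ref{prop:coef_to_mahler} both claims reduce to showing that the minimum of $v_p(\beta_k/k!)$ over all $k\ge 0$ is attained — and last attained — within $\{0,\ldots,d-1\}$. Two structural facts prepare the argument. First, writing $\mu := \min_{0\le j\le d-1} v_p(\gamma_j)$, attained at some $j_1 \le d-1$, and using $v_p(j!) = 0$ for $j \le d-1 < p$, we have $m^\ast := \min_{0\le j\le d-1} v_p(\beta_j) \le v_p(\beta_{j_1}) = j_1 + \mu \le (d-1)+\mu$. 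Second, $\langle\gamma_k\rangle_k$ satisfies the integer order-$d$ recurrence furnished by the characteristic polynomial of the integer matrix $B$ (Cayley--Hamilton), so the ultrametric inequality propagates $v_p(\gamma_k)\ge \mu$ to every $k\ge 0$.

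The main obstacle --- and the only point where the size of $p$ is spent --- is to rule out any later coefficient attaining the minimum. The key estimate is $k - v_p(k!) \ge d$ for every $k \ge d$ (given $p > d$): for $k < p$ it is trivial since $v_p(k!) = 0 \le k-d$, while for $k \ge p$ Legendre's formula $v_p(k!) = (k - s_p(k))/(p-1)$, with $s_p(k)\ge 1$ the base-$p$ digit sum, reduces it to $(p-2)k + 1 \ge (p-1)d$, which follows from $k \ge p \ge d+1$. Combining this with $v_p(\gamma_k)\ge\mu$ yields, for all $k\ge d$,
\[
v_p(\beta_k/k!) = \bigl(k - v_p(k!)\bigr) + v_p(\gamma_k) \;\ge\; d+\mu \;>\; (d-1)+\mu \;\ge\; m^\ast .
\]
Hence the minimal normalised valuation equals $m^\ast$ and its largest minimising index lies in $\{0,\ldots,d-1\}$, which is exactly statements (2) and (3) after translating back through Proposition~\ref{prop:coef_to_mahler}.
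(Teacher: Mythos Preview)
Your proof is correct and follows essentially the same route as the paper's: both arguments choose $M$ as the order of $A$ in $\mathrm{GL}_d(\mathbb F_p)$, expand $(I+pB)^n$ binomially to read off the Mahler coefficients $\beta_k=p^k\gamma_k$, invoke Cayley--Hamilton on $B$ to propagate the lower bound $v_p(\gamma_k)\ge\mu$, and then bound $v_p(\beta_k/k!)$ for $k\ge d$ strictly above the minimum over $\{0,\ldots,d-1\}$ before applying Proposition~\ref{prop:coef_to_mahler}. The only notable difference is your estimate $k-v_p(k!)\ge d$ via the digit-sum form of Legendre's formula, which is slightly sharper than the paper's use of $v_p(k!)\le k/(p-1)$ (yours needs only $p\ge d+1$ rather than $p>d+1$), but this is a minor variation within the same strategy.
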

\begin{proof}
Write $u_n = x^\top A^n y$, where $A$ is the companion matrix of the recurrence and 
\[ x = \begin{pmatrix} 0 & 0  & \dots & 1 \end{pmatrix},\qquad \qquad  y = \begin{pmatrix} u_{d-1} & u_{d-2}&  \dots & u_0 \end{pmatrix}^T \, .\]
Since $p \nmid a_0$, we have $p \nmid \det(A)$ and hence $A$ is invertible in $\mathrm{GL}_d(\mathbb Z/p\mathbb Z)$.
It follows that 
$A^M \equiv I \pmod p$, for some $0<M<p^{d^2}$.
Write $A^M = I+pB$ for some $d\times d$ integer matrix $B$.  Then for $\ell \in \{0,\ldots,M\}$ we have
\begin{eqnarray*}
u_{Mn+\ell} &=& x^T A^{Mn+\ell} y \\
    &=& x^T (I+pB)^{n} A^\ell y \\
    &=& \sum_{k=0}^n \binom{n}{k} p^k \, x^T B^k A^\ell y \\
    &=& \sum_{k=0}^\infty \beta_{\ell,k} \binom{n}{k} \, ,
\end{eqnarray*}
where $\beta_{\ell,k} := p^k \, x^T B^k A^\ell y$.  Since $v_p(\beta_{\ell,k}) \geq k$, using the standard bound $v_p(k!) \leq \frac{k}{p-1}$,
we have $v_p(\beta_{\ell,k}/k!) \geq \frac{k(p-2)}{p-1}$.  Thus $v_p(\beta_{\ell,k}/k!)\rightarrow \infty$ as $k\rightarrow \infty$
and so the Mahler series $F_\ell (x):= \sum_{k=0}^\infty  \beta_{\ell,k} \binom{x}{k}$ defines an analytic function 
such that $F_\ell(n) = u_{Mn +\ell}$ for all $n \in \mathbb Z$.  

Fix $\ell \in \{0,\ldots,M-1\}$ and 
write $\gamma_{\ell,k} := x^\top B^kA^\ell y$, so that $\beta_{\ell,k} = p^k \gamma_{\ell,k}$. 
By the Cayley-Hamilton theorem, there are integers $b_0,\ldots,b_{d_1}$ such that $B^d = b_0 B^{d-1} +\cdots b_{d-1}I$. Hence
\begin{gather} 
\gamma_{\ell,k} = x^\top B^k A^\ell y = \sum_{i=0}^{d-1} b_i \, x^\top B^{k-1-i} A^\ell y = \sum_{i=0}^{d-1} b_i \gamma_{\ell,k-1-i}  
\label{eq:RECUR}
\end{gather}
for all $k \geq d$.
Let $m:=\min \left\{ v_p(\gamma_{\ell,k}) : 0 \leq k \leq d-1 \right\}$.  Then, by the Equation~\ref{eq:RECUR},
it follows by strong induction on $k$ that $v_p(\gamma_{\ell,k}) \geq m$ for all $k\in\mathbb N$.  

For all $k \geq d$, using the standard bound $v_p(k!) \leq \frac{k}{p-1}$,  we  have:
\begin{eqnarray*}
    v_p(\beta_{\ell,k}/k!) &=&  v_p(\gamma_{\ell,k}) + v_p(p^k/k!)\\
       &\geq &  m+  \textstyle\frac{k(p-2)}{p-1}  \\
       &\geq &   m+  \textstyle\frac{d(p-2)}{p-1}  \\
       &>& m+d-1 \quad\text{(since $p>d+1$).}
\end{eqnarray*}
On the other hand, for $0\leq k \leq d-1$ we have $v_p(\beta_{\ell,k}/k!) \leq v_p(\gamma_{\ell,k})+d-1$ and hence
\[
\min_{0\leq k\leq d-1} v_p(\beta_{\ell,k}/k!) \leq m+d-1 \, . 
\]
We conclude that  $k(F_\ell) \leq d-1$. 

Applying Proposition~\ref{prop:coef_to_mahler}, since $d-1<p$, we have 
\[ V(F_\ell)=V'(F_\ell)=\min_{0\leq k\leq d-1} v_p(\beta_{\ell,k} /k!) = \min_{0\leq k\leq d-1} v_p(\beta_{\ell,k}) = \min_{0\leq k\leq d-1}v_p((\Delta^k F_\ell)(0))\]
and 
$j(F) = k(F)$ is the maximum value of $k \in \{0,\ldots,d-1\}$ such that $v_p(\beta_{\ell,k}) = V'(F)$.  
\end{proof}

\section{Algorithm and Complexity Analysis}
Let $\mathrm{LRS}(d)$ denote the class of LRS of order $d$. Throughout this section, we assume we are working inside the class $\mathrm{LRS}(d)$ for some $d \in \mathbb N$.
Our complexity estimates are in terms of $\| \boldsymbol u \|$, the length of the description of initial values and recurrence 
that define $\boldsymbol{u}$, and the bit length of the threshold $N$ describing the range $\{0,\ldots,N\}$ in which we seek zeros of LRS\@.

The goal of this section is to prove the following result:
\begin{restatable}{theorem}{SkolemEqSLP} \label{thm:Skolem_EqSLP} 
For every $d \in \mathbb N$, the Bounded Skolem Problem on $\mathrm{LRS}(d)$ is disjunctively 
Turing reducible to $\mathsf{EqSLP}$ and hence lies in $\mathsf{coRP}$.
\end{restatable}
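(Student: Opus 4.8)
The plan is to fix the order $d$ and, given an instance $(\LRS u, N)$, reduce it to polynomially many exact zero-tests $u_m \stackrel{?}{=} 0$, each dispatched by an $\mathsf{EqSLP}$ query. First I would choose a prime $p = \poly(\|\LRS u\|)$ with $p > d+1$ and $p \nmid a_0$ (assuming $a_0 \neq 0$; otherwise $\LRS u$ drops to order $d-1$ and I induct). Such a $p$ exists and is found by trial among the first $\|\LRS u\| + d + 2$ primes, since at most $\|\LRS u\|$ of them divide $a_0$. Applying Theorem~\ref{thm:interpolate} yields $M < p^{d^2} = \poly(\|\LRS u\|)$ and analytic functions $F_\ell : \Z_p \to \Z_p$ with $F_\ell(n) = u_{Mn+\ell}$. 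A zero $u_m = 0$ with $m \le N$ and $m \equiv \ell \pmod M$ is precisely an integer zero $z \le (N-\ell)/M$ of $F_\ell$, where $m = Mz + \ell$. So it suffices, for each of the $\poly$-many residues $\ell$, to locate the integer zeros of $F_\ell$ in a bounded range; by Corollary~\ref{cor:Weierstrass} and Theorem~\ref{thm:interpolate} the function $F_\ell$ has at most $j(F_\ell) \le d-1$ zeros in $\O_p$ altogether, and I will isolate them by a $p$-adic search.

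The search is a depth-first traversal of the tree of discs $\overline D(z,r)$, $z \in \{0,\ldots,p^r-1\}$. Fix $t$ with $p^t > N$, so $t = \poly(\log N)$. By Corollary~\ref{cor:Weierstrass} the number of zeros of $F_\ell$ in $\overline D(z,r)$ equals $j(h_{z,r})$, where $h_{z,r}(x) = F_\ell(p^r x + z)$; I descend into a disc only when $j(h_{z,r}) \ge 1$. Since the level-$r$ discs partition $\O_p$ and the values $j(h_{z,r})$ sum to $j(F_\ell) \le d-1$, at most $d-1$ discs are active at each level, so the traversal inspects at most $(d-1)\,p\,t$ discs and produces at most $d-1$ surviving residues $z \in \{0,\ldots,p^t-1\}$ per $\ell$. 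Collecting, over all $\ell$, those surviving $z$ at depth $t$ with $Mz+\ell \le N$ gives a candidate set $S$ of size $\poly(\|\LRS u\|, \log N)$; every integer zero $m \le N$ of $\LRS u$ corresponds to some $(z,\ell) \in S$, because the zero lies in every ancestor disc and so the search reaches it.

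The computational heart is evaluating $j(h_{z,r})$ directly from $\LRS u$. Writing $c = Mz+\ell$ and $W = A^{Mp^r} - I$, I expand $h_{z,r}(n) = x^\top (I+W)^n A^c y = \sum_k \beta_k^{(z,r)} \binom nk$ with $\beta_k^{(z,r)} = x^\top W^k A^c y$. Since $A^M \equiv I \pmod p$, one checks $W = p^{r+1}\tilde B$ for an integer matrix $\tilde B$, so $\beta_k^{(z,r)} = p^{(r+1)k}\tilde\gamma_k$ with $\tilde\gamma_k = x^\top \tilde B^k A^c y$, and $\tilde\gamma_k$ obeys the order-$d$ Cayley--Hamilton recurrence of $\tilde B$. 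Repeating the estimate from the proof of Theorem~\ref{thm:interpolate} (using $p > d+1$) shows the minimum in Proposition~\ref{prop:coef_to_mahler} is attained at some $k \le d-1$, whence
\[
V(h_{z,r}) = \min_{0\le k\le d-1} v_p\!\big(\beta_k^{(z,r)}\big), \qquad j(h_{z,r}) = \max\big\{\,k\le d-1 : v_p(\beta_k^{(z,r)}) = V(h_{z,r})\,\big\}.
\]
Each of these $d$ coefficients is computed modulo $p^P$ by iterated squaring of $A^{Mp^r}$ and $A^c$ (exponents at most $\poly\cdot N$, hence $\poly(\log N)$ multiplications), so $j(h_{z,r})$ follows once the precision $P$ is fixed.

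The hard part is bounding $P$: a priori $\beta_0^{(z,r)} = u_c$ has exponentially many bits, so its valuation---and hence $V(h_{z,r})$---could be astronomically large, defeating any fixed precision. I resolve this through the Weierstrass factorisation of $F_\ell$: for $y \in \O_p$ one has $v_p(F_\ell(y)) = V(F_\ell) + \sum_i v_p(y - a_i)$, summed over the $\le d-1$ zeros $a_i$ of $F_\ell$. Because $d-1 < p$, at least one of the $p$ level-$(r+1)$ subdiscs of $\overline D(z,r)$ contains no zero; choosing $y$ there makes every $v_p(y-a_i) \le r$, and since $V(h_{z,r}) = \min_{y \in \overline D(z,r)} v_p(F_\ell(y))$ this yields $V(h_{z,r}) \le V(F_\ell) + (d-1)r \le \poly(\|\LRS u\|,\log N)$, using $V(F_\ell) \le v_p(u_\ell) = \poly$. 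Taking $P$ just above this bound makes every $j(h_{z,r})$---and the entire construction of $S$---deterministic polynomial time. Finally, $(\LRS u, N)$ is a YES-instance iff some $(z,\ell) \in S$ has $u_{Mz+\ell} = 0$; for each such pair I build the $\poly$-size arithmetic circuit computing $x^\top A^{Mz+\ell} y$ by iterated squaring and query $\mathsf{EqSLP}$. This is a disjunctive Turing reduction to $\mathsf{EqSLP}$, and after amplifying each of the polynomially many queries and taking their disjunction the one-sided error is preserved, placing the Bounded Skolem Problem on $\mathrm{LRS}(d)$ in $\mathsf{coRP}$.
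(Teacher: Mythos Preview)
Your proof follows essentially the same architecture as the paper: choose $p$ and $M$, interpolate each subsequence by $F_\ell$, run a depth-first search on $p$-adic discs using $j(h_{z,r})$, bound the required precision via an upper bound on $V(h_{z,r})$, and hand the surviving candidates to $\mathsf{EqSLP}$. The one substantive technical difference is your bound $V(h_{z,r}) \le V(F_\ell) + (d-1)r$: the paper proves this (Lemma~\ref{lem:min_coef}) by differentiating, observing that $h_{z,r}^{(j(F_\ell))}(0) = p^{rj(F_\ell)} F_\ell^{(j(F_\ell))}(z)$ and that the latter has valuation $V(F_\ell)$; your route via the Weierstrass factorisation $v_p(F_\ell(y)) = V(F_\ell) + \sum_i v_p(y-a_i)$ and a pigeonhole on the $p > d-1$ subdiscs is equally valid and arguably more transparent.

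There is one small gap. You never dispose of the case that the subsequence $\langle u_{Mn+\ell}\rangle_n$ is identically zero: then $F_\ell \equiv 0$, every $\beta_k^{(z,r)}$ vanishes, $j(h_{z,r})$ is undefined, and your search does not terminate. Relatedly, your bound $V(F_\ell) \le v_p(u_\ell) = \poly$ is vacuous when $u_\ell = 0$. The paper fixes both issues at once by first checking whether $u_{Mn+\ell}=0$ for $n=0,\ldots,d-1$ (polynomial time since $Mn+\ell = \poly(\|\LRS u\|)$); if so, the subsequence is identically zero and is reported directly, and otherwise $V(F_\ell) = \min_{0\le k\le d-1} v_p((\Delta^k F_\ell)(0))$ is finite and bounded by $\poly(\|\LRS u\|)$ via the trivial growth bound on $|u_{Mj+\ell}|$. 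With this check inserted, your argument is complete.
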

Note above that 
one can test whether  at least one among a finite collection of arithmetic circuits is zero 
by testing the product for zeroness, which can be done in \textsf{coRP}~\cite{Schonhage_1979}.
\subsection{Informal outline of the algorithm}
We will first outline the general ideas behind the algorithm and the ingredients required to prove the required complexity bounds on the Bounded Skolem Problem. Given an LRS $\LRS{u}$ and integer upper bound $N$ written in binary, let $\|\LRS{u}\|$ and $\|N\|$ denote the lengths of their respective binary strings. Recall we are required to search for integers $0 \leq n \leq N$ with $u_n = 0$.

In short, the idea is to find $p$-adic approximations of any possible zeros of $\LRS{u}$ by finding discs of decreasing size in which zeros may lie. We do this approximation until for each disc there is only one possible integer that lies below the upper bound $N$ and lies inside the disc, and then we check all the candidate integer zeros. In more detail, recalling the definition \eqref{eqn:disc_def} of the disc $\overline D(z,r)$, we do the following:

\begin{enumerate}
    \item Choose a prime $p$ that does not divide the last coefficient $a_0$ of the recurrence.
    \item Split the LRS into subsequences $u_{Mn+\ell}$ for some integer $M$ and $0 \leq \ell \leq M-1$ such that $u_{Mn+\ell}$ may be interpolated by a $p$-adic analytic function $F_\ell : \Z_p \to \Z_p$. We have $F_\ell$ is identically zero if and only if $u_{Mn+\ell} = 0$ for $n=0,1,\dots,d-1$. This may be checked in polynomial-time as $Mn+\ell = \poly(\|\LRS{u}\|)$ for these values of $n$. We analyse those functions $F_\ell$ for which the corresponding sequence $u_{Mn+\ell}$ is not identically zero.
    
    \item For $R$ such that $Mp^R + \ell \geq N$, find all discs $\overline D \left(z,R \right)$ for $0 \leq z \leq p^R -1$ containing a zero $x_0 \in \O_p$ of $F_\ell$. This identifies all residue classes mod $p^R$ that could contain an integer zero of $u_{Mn+\ell}$.
    \begin{itemize}
        \item To do this efficiently, we inductively find each disc $\overline D(z,r)$ containing a zero of $F_\ell$ for each $0 \leq r \leq R$ and $0 \leq z \leq p^r-1$.
        \item If we have found that the disc $\overline D(z,r)$ contains a zero of $F_\ell$ for integer $0 \leq z \leq p^r-1$, then check each disc $\overline D(z+p^ra,r+1)$ to determine whether it has any zeros of $F_\ell$, for integers $0 \leq a \leq p-1$. 
        \item Checking whether the disc $\overline D(z,r)$ contains any zeros of $F_\ell$ may be done by computing $j(h_{z,r})$, where $h_{z,r}(x) = F(p^r x + z)$, by Corollary \ref{cor:Weierstrass}. In practice we do this by looking at the $p$-adic valuations of the Mahler-series coefficients $\beta_{\ell,k}$
        for each $k$ (as defined in Section~\ref{sec:interpolate}), using Theorem \ref{thm:interpolate}.
    \end{itemize}
    \item Check whether $u_{Mz+\ell} = 0$ for each disc $\overline D(z,R)$ computed.
\end{enumerate}
Correctness of the algorithm is straightforward to see as by definition of $R$, any disc $\overline{D}(z,R)$ contains at most one integer $n$ such that $0 \leq Mn+\ell \leq N$, namely $n=z$ being the only possible candidate. The discs found by the end of the computation are the only discs $\overline D(z,R)$ containing a $p$-adic zero $x_0 \in \O_p$ of $F_\ell$, so they obviously contain all integer zeros of $F_\ell(n) = u_{Mn+\ell}$. The bulk of the problem is to prove that steps 1-3 above may be done in $\poly(\|\LRS{u}\|,\|N\|)$ time. The strategy of the next subsection will be to show that:
\begin{enumerate}
    \item $p,M$ may be taken to be $\poly(\|\LRS{u}\|)$ in magnitude.
    \item Computing whether the disc $\overline D(z,r)$ contains any zeros of $F_\ell$ takes $\poly(\|\LRS{u}\|,\|N\|)$ time for any integer $0 \leq z \leq p^r-1$ and $1 \leq r \leq R$ for $R$ smallest integer such that $Mp^R + \ell \geq N$
    \item It takes $\poly(\|\LRS{u}\|,\|N\|)$ many of the above computations to determine all the residue classes mod $p^R$ possibly containing a zero of $F_\ell$
\end{enumerate}
\subsection{Proofs of complexity bounds}
In this subsection we prove that all the parts of the algorithm outlined have the required complexity bounds.
\begin{lemma} \label{lem:prime_and_M}
We may find a prime $p > d+1$ and integer $M$ with $\poly(\|\LRS{u}\|)$ magnitude in $\poly(\|\LRS{u}\|)$ time that satisfy the conditions of Theorem \ref{thm:interpolate}.
\end{lemma}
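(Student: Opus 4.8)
The plan is to treat the two objects separately: first produce a suitable prime $p$, and then, with $p$ fixed, compute the exponent $M$. Recall that Theorem~\ref{thm:interpolate} requires $p>d+1$ to be prime with $p\nmid a_0$, together with an integer $0<M<p^{d^2}$ satisfying $A^M\equiv I\pmod p$, where $A$ is the companion matrix. Since $\LRS{u}$ has order exactly $d$, its minimal recurrence has $a_0\neq 0$, so such a $p$ exists; the whole content of the lemma is to certify that one of polynomial magnitude can be found in polynomial time, and likewise for $M$.

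For the prime, I would argue by a counting bound on the prime divisors of $a_0$. The integer $a_0$ is nonzero and has at most $\|a_0\|\leq\|\LRS{u}\|$ bits, hence at most $\|\LRS{u}\|$ distinct prime factors. Consequently, among any $\|\LRS{u}\|+1$ distinct primes exceeding $d+1$, at least one fails to divide $a_0$. By standard effective bounds on the prime-counting function, the $(\|\LRS{u}\|+1)$-th prime larger than $d+1$ has magnitude $\poly(\|\LRS{u}\|)$ (using that $d$ is fixed, or merely $d\leq\|\LRS{u}\|$). I would therefore enumerate the candidates $p=d+2,d+3,\dots$ in increasing order, testing each for primality by trial division and testing $p\nmid a_0$ by a single division; since every candidate inspected is of polynomial magnitude and a success is reached within the first $\poly(\|\LRS{u}\|)$ of them, this locates a valid $p=\poly(\|\LRS{u}\|)$ in polynomial time.

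With $p$ fixed, reduce $A$ modulo $p$. Because $p\nmid a_0$ and $\det A=\pm a_0$, the reduction lies in $\mathrm{GL}_d(\mathbb Z/p\mathbb Z)$ and so has finite multiplicative order, which I take to be $M$. This $M$ divides the group order $|\mathrm{GL}_d(\mathbb Z/p\mathbb Z)|=\prod_{i=0}^{d-1}(p^d-p^i)<p^{d^2}$, and since $d$ is fixed while $p=\poly(\|\LRS{u}\|)$, the bound $M<p^{d^2}$ is itself $\poly(\|\LRS{u}\|)$. I would then compute $M$ by the naive iteration: form the successive powers $A,A^2,A^3,\dots$ modulo $p$ until the identity is reached, returning the number of steps. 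Each step is a single matrix multiplication modulo $p$, costing $\poly(\|\LRS{u}\|)$ time, and the number of steps is exactly $M<p^{d^2}=\poly(\|\LRS{u}\|)$, so the whole computation runs in polynomial time.

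The only real subtlety, and the step I would flag, is that both the magnitude bound $M<p^{d^2}$ and the admissibility of the naive iteration rest on $d$ being a fixed constant: for varying $d$ the quantity $p^{d^2}$ is exponential, and correspondingly the number of interpolating subsequences $u_{Mn+\ell}$ blows up. This is precisely the source of the exponential dependence on the order noted in the introduction, and it is unavoidable in view of the $\mathsf{NP}$-hardness of the Bounded Skolem Problem. Everything else is routine: the counting bound for $p$ and the group-order bound for $M$ are elementary, and no $p$-adic computation is required at this stage.
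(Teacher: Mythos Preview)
Your proof is correct. The treatment of $M$ is identical to the paper's: iterate powers of $A\bmod p$ until hitting the identity, bounding the number of steps by $|\mathrm{GL}_d(\mathbb Z/p\mathbb Z)|<p^{d^2}$.

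The only difference is in how you locate the prime. The paper invokes Bertrand's postulate to find a prime in the interval $(\max\{d+2,|a_0|,4\},\,2\max\{d+2,|a_0|,4\})$; any such prime exceeds $|a_0|$ and therefore cannot divide it, so no coprimality test is needed. You instead bound the number of distinct prime divisors of $a_0$ by its bit-length and scan primes starting at $d+2$ until one coprime to $a_0$ appears. Both arguments are elementary and yield a prime of magnitude $\poly(\|\LRS{u}\|)$; yours will typically return a smaller $p$ (hence a smaller $M$ in practice), while the paper's avoids the explicit divisibility check. Neither advantage affects the asymptotic statement.
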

\begin{proof}
Choose any prime $p \geq d+2$ such that $p \nmid a_0$, the last coefficient of the recurrence relation \eqref{eqn:LRS_def} that $\LRS{u}$ satisfies. We justify that such a prime may be found, of magnitude $\poly(\|u\|)$. Define Chebyshev's prime counting function 
\begin{align*}
    \vartheta(n) = \sum_{\substack{p \leq n \\ p \text{ prime}}} \log p \, .
\end{align*}
Then $e^{\vartheta(n)}$ is the product of all primes at most $n$. If $e^{\vartheta(n)} > (d+1)!|a_0|$, then there must be a prime $p$ of magnitude at most $n$ such that $p \nmid (d+1)!|a_0|$, which satisfies our requirements. By \cite[Corollary to Theorem 4]{rosser_1962}, we have $\vartheta(n) > n \left(1- \frac{1}{\log n} \right)$ for all $n \geq 41$, so in particular there must be a suitable prime $p$ with $p \leq \max \left\{41,\frac{\log((d+1)!|a_0|)}{1 - 1/\log 41} \right\} = \poly(\|u\|)$. 

Now, by Theorem \ref{thm:interpolate} we can take $M < p^{d^2}$, in particular the smallest positive integer such that $A^M \equiv I \mod p$, where $A$ is the companion matrix of $\LRS{u}$. Such $M$ is easily found in $\poly(\|\LRS{u}\|)$ time by directly checking if $A^m \equiv I \mod p$ for each $0 < m < p^{d^2}$.
\end{proof}
We would now like to prove that the search for $p$-adic discs containing zeros of $F_\ell$ takes $\poly(\|\LRS{u}\|)$ time. To do this, we need to be able to find $j(h_{z,r})$ quickly, where $h_{z,r}(x) = F(p^r x + z)$, so we need to show $V(h_{z,r})$ is not too large.
\begin{lemma} \label{lem:min_coef}
Let $F_\ell : \mathbb Z_p \to \mathbb Z_p$ interpolate the subsequence $u_{Mn+\ell}$ for $p,M$ as in Lemma \ref{lem:prime_and_M}, and suppose that the subsequence $u_{Mn+\ell}$ is not identically zero. Let $h_{z,r}(x) = F_\ell(p^rx+z)$ for integers $r$ and $0 \leq z \leq p^r-1$. Then 
\begin{enumerate}
    \item $j(h_{z,r}) \leq j(F_\ell) \leq d-1$
    \item $V(h_{z,r}) \leq V(F_\ell) + rj(F_\ell) = \poly(\|\LRS{u}\|,r)$.
\end{enumerate}
\end{lemma}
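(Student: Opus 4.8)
The plan is to treat the two claims separately. The first follows from the geometry of $p$-adic discs together with Weierstrass zero-counting; the second splits into an inequality, obtained by tracking the Weierstrass data under the substitution $x \mapsto p^r x + z$, and a size bound, which is where the real work lies.

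For the first claim I would argue by counting zeros. Since $z$ is an integer with $0 \le z \le p^r - 1$ and $r \ge 0$, the disc $\overline D(z,r)$ of~\eqref{eqn:disc_def} satisfies $\overline D(z,r) \subseteq \O_p = \overline D(0,0)$: if $v_p(x-z) \ge r \ge 0$ then $v_p(x) \ge \min(v_p(x-z),v_p(z)) \ge 0$. By Corollary~\ref{cor:Weierstrass}, $j(h_{z,r})$ is exactly the number of zeros of $F_\ell$ in $\overline D(z,r)$, counted with multiplicity, whereas $j(F_\ell) = j(h_{0,0})$ counts those in $\O_p$. As the zeros in the subdisc form a sub-multiset of those in $\O_p$, we obtain $j(h_{z,r}) \le j(F_\ell)$, and $j(F_\ell) \le d-1$ is exactly item~(3) of Theorem~\ref{thm:interpolate}. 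Both counts are finite because $F_\ell \not\equiv 0$ forces $h_{z,r} \not\equiv 0$.

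For the inequality in the second claim I would expand the power series of $h_{z,r}$. Writing $F_\ell(x) = \sum_j b_j x^j$, the coefficient of $x^i$ in $h_{z,r}(x) = \sum_j b_j (p^r x + z)^j$ is $c_i = p^{ri} d_i$, where $d_i = \sum_{j \ge i} b_j \binom{j}{i} z^{j-i}$. Taking $i = j_0 := j(F_\ell)$ gives $V(h_{z,r}) \le v_p(c_{j_0}) = r j_0 + v_p(d_{j_0})$, so it suffices to show $v_p(d_{j_0}) = V(F_\ell)$. For this I would isolate the leading term $d_{j_0} = b_{j_0} + \sum_{j > j_0} b_j \binom{j}{j_0} z^{j-j_0}$: here $v_p(b_{j_0}) = V(F_\ell)$ is minimal, while by the defining strict inequality of $j_0$ in Theorem~\ref{thm:Weierstrass} every term with $j > j_0$ has valuation $\ge v_p(b_j) > V(F_\ell)$ (using $v_p(z) \ge 0$ and $\binom{j}{j_0} \in \Z$). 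The strong triangle inequality then forces $v_p(d_{j_0}) = V(F_\ell)$, giving $V(h_{z,r}) \le V(F_\ell) + r\,j(F_\ell)$.

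It remains to show $V(F_\ell) + r\,j(F_\ell) = \poly(\|\LRS{u}\|,r)$, and \emph{this I expect to be the main obstacle}. Since $j(F_\ell) \le d-1 = O(1)$ the term $r\,j(F_\ell)$ is linear in $r$, so everything reduces to proving $V(F_\ell) = \poly(\|\LRS{u}\|)$. By Theorem~\ref{thm:interpolate}, $V(F_\ell) = \min_{0 \le k \le d-1} v_p(\beta_{\ell,k})$ with $\beta_{\ell,k} = p^k\, x^\top B^k A^\ell y \in \Z$; the catch is that these integers involve $A^\ell$ and $B$ with $\ell, M$ of exponential magnitude, so the naive bound $v_p(\beta_{\ell,k}) \le \log_p|\beta_{\ell,k}|$ is only exponential. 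To obtain a polynomial bound I would pass to the characteristic roots: because $p \nmid a_0$ each root $\lambda_i$ is a $p$-adic unit (their valuations are non-negative and sum to $v_p(a_0) = 0$), and writing $\beta_{\ell,k} = \sum_i c_i \lambda_i^\ell (\mu_i - 1)^k$ with $\mu_i := \lambda_i^M$, inverting the Vandermonde system in the $\mu_i - 1$ gives $V(F_\ell) \le \min_i v_p(c_i) + \sum_{i<j} v_p(\mu_i - \mu_j)$. The first term is at most $v_p(u_{n_0})$ for any nonzero initial value $u_{n_0}$ with $n_0 \le d-1$, hence $\poly(\|\LRS{u}\|)$ since initial values have polynomial bit length. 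The genuine difficulty is the second term, which demands a polynomial upper bound on the $p$-adic closeness $v_p(\lambda_i^M - \lambda_j^M)$ of distinct $M$-th powers of characteristic roots; this is exactly the quantitative $p$-adic estimate furnished by van der Poorten--Schlickewei, and it is the crux of the entire bound. The degenerate case $\mu_i = \mu_j$ is handled by combining equal exponentials and using a confluent Vandermonde.
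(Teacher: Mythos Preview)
Your treatment of Item~1 and of the inequality $V(h_{z,r}) \le V(F_\ell) + r\,j(F_\ell)$ is correct. For the inequality the paper argues via derivatives (using $h_{z,r}^{(k)}(0) = p^{rk} F_\ell^{(k)}(z)$ and $v_p(F_\ell^{(j_0)}(z)) = v_p(F_\ell^{(j_0)}(0)) = V(F_\ell)$), while you expand the power-series coefficients directly; the two arguments are essentially the same computation in different notation.

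The gap is in your treatment of the bound $V(F_\ell) = \poly(\|\LRS{u}\|)$. You assert that $\ell, M$ are ``of exponential magnitude'' and on that basis dismiss the naive Archimedean bound on $\beta_{\ell,k}$ as only exponential. This is a misreading of the hypotheses: the statement takes $p,M$ \emph{as in Lemma~\ref{lem:prime_and_M}}, which explicitly guarantees that $p$ and $M$ have magnitude $\poly(\|\LRS{u}\|)$, and hence so does $\ell < M$. With this in hand the paper's argument is entirely elementary: by Theorem~\ref{thm:interpolate}(2), $V(F_\ell) = \min_{0 \le k \le d-1} v_p((\Delta^k F_\ell)(0))$; by~\eqref{eq:deriv-formula} each $(\Delta^k F_\ell)(0)$ is a $\pm$-binomial combination of the values $u_{Mj+\ell}$ for $0 \le j \le d-1$; each such index $Mj+\ell$ is $\poly(\|\LRS{u}\|)$, so the crude growth bound $|u_n| \le 2^{n\|\LRS{u}\|} d^n$ gives $|(\Delta^k F_\ell)(0)| \le 2^{\poly(\|\LRS{u}\|)}$; and hence $V(F_\ell) \le \log |(\Delta^{k_0} F_\ell)(0)| = \poly(\|\LRS{u}\|)$ for any $k_0$ with $(\Delta^{k_0} F_\ell)(0) \ne 0$ (one exists since the subsequence is not identically zero).

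Your detour through characteristic roots, Vandermonde inversion, and a putative $p$-adic separation estimate on the $\mu_i = \lambda_i^M$ is therefore unnecessary. Note also that the result of~\cite{van_der_poorten_1991} invoked in the paper bounds $j(F_\ell)$, not $v_p(\mu_i - \mu_j)$; if you were to bound the latter it would ultimately be via the Archimedean size of $\lambda_i^M - \lambda_j^M$ and its conjugates, which again just uses that $M$ is polynomial---so even your longer route rests on the very fact you overlooked.
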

\begin{proof}
We have $j(h_{z,r}) \leq j(F_\ell)$ by Corollary \ref{cor:Weierstrass} and we have $j(F_\ell) \leq d-1$ by Theorem \ref{thm:interpolate}. 
This proves Item 1.

For the bound on $V(h_{z,r})$ in Item 2., note that by Item 2 of Theorem~\ref{thm:interpolate} we have

\begin{align*}
V(F_\ell) \in \{ v_p((\Delta^0F_\ell)(0)), v_p((\Delta^1F_\ell)(0)), \dots, v_p((\Delta^{d-1}F_\ell)(0)) \} \, .
\end{align*}
Then from the formula 
\begin{align*}
    (\Delta^k F_\ell)(0) = \sum_{j=0}^k (-1)^{k-j} \binom{k}{j} F_\ell(j)
\end{align*}
we have
\begin{align*}
V(F_\ell) \leq \log \left( \underset{0 \leq k \leq d-1} \max |(\Delta^k F_\ell)(0)| \right) \leq \log \left( \sum_{j=0}^{d-1} \binom{d-1}{j} |F_\ell(j)| \right) \leq \log\left( \sum_{j=0}^{d-1} 2^d |u_{Mj + \ell}| \right)
\end{align*}
and using that $|u_{Mj+\ell}| \leq 2^{(Mj+\ell+1)\|\LRS{u}\|}d^{Mj+\ell}$, which is easily shown by induction with the recurrence relation \eqref{eqn:LRS_def}, we get $V(F_\ell) = \poly(\|\LRS{u}\|)$.

For the bound on $V(h_{z,r})$, note the following. For any analytic $F(x) = \sum_{j=0}^\infty b_j x^j$, for $0 \leq k \leq d-1$ we have $F^{(k)}(0) = b_kk!$ so since $p \geq d+2$ we have $v_p(F^{(k)}(0)) = v_p(b_k)$. Furthermore, we have $v_p(b_{j(F)}) < v_p(b_k)$ for all $k > j(F)$ so $v_p(F^{(j(F))}(z)) = v_p(F^{(j(F))}(0))$ for all $z \in \Z$. From this we can deduce
\begin{align*}
V(h_{z,r}) = v_p(h_{z,r}^{(j(h_{z,r}))}(0)) \leq v_p(h_{z,r}^{(j(F_\ell))}(0)) = v_p(p^{rj(F_\ell)}F_\ell^{(j(F_\ell))}(z)) &= V(F_\ell) + rj(F_\ell)
\end{align*}
and we have $V(F_\ell) + rj(F_\ell) \leq V(F_\ell) + r(d-1) = \poly(\|\LRS{u}\|,r)$.

\end{proof}
Having shown that $V(h_{z,r})$ cannot be too large, we now show we can compute $j(h_{z,r})$ sufficiently quickly.
\begin{proposition} \label{prop:fast_disc}
Let $F_\ell : \mathbb Z_p \to \mathbb Z_p$ interpolate a not-identically-zero subsequence $u_{Mn+\ell}$ with $p,M = \poly(\|\LRS{u}\|)$. For any integers $r>0$ and $0 \leq z \leq p^r-1$, we may find the number of zeros (counted with multiplicity) of $F_\ell$ in the disc $\overline D(z,r)$ in $\poly(\|\LRS{u}\|,r)$ time.
\end{proposition}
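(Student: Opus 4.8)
The plan is to reduce the count of zeros in $\overline D(z,r)$ to the computation of a single integer invariant, $j(h_{z,r})$, and then to extract that invariant from polynomially many finite differences of the interpolating function, all computed modulo a suitable power of $p$.

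First I would invoke Corollary~\ref{cor:Weierstrass}: the number of zeros of $F_\ell$ in $\overline D(z,r)$, counted with multiplicity, is exactly $j(h_{z,r})$ for $h_{z,r}(x) = F_\ell(p^r x + z)$, so it suffices to compute $j(h_{z,r})$ in $\poly(\|\LRS{u}\|,r)$ time. Next I would observe that $h_{z,r}$ itself interpolates an LRS subsequence: since $h_{z,r}(n) = F_\ell(p^r n + z) = u_{Mp^r n + Mz + \ell}$, it interpolates $\LRS{u}$ with step $M' := Mp^r$ and offset $\ell' := Mz + \ell \in \{0,\ldots,M'-1\}$. Because $A^{M'} = (A^M)^{p^r} \equiv I \pmod p$, the hypotheses of Theorem~\ref{thm:interpolate} are met for this subsequence (the proof of that theorem uses only $A^{M'} \equiv I \bmod p$ and never a bound on the size of the step, so its conclusions transfer to $h_{z,r}$ verbatim). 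Applying Items 2 and 3 of Theorem~\ref{thm:interpolate} to $h_{z,r}$ then gives $V(h_{z,r}) = \min_{0 \le k \le d-1} v_p(\beta_k)$ and $j(h_{z,r}) = \max\{k \in \{0,\ldots,d-1\} : v_p(\beta_k) = V(h_{z,r})\}$, where $\beta_k = (\Delta^k h_{z,r})(0) = \sum_{i=0}^{k} (-1)^{k-i}\binom{k}{i}\, u_{Mp^r i + Mz + \ell}$. Thus only the $p$-adic valuations of the $d$ explicit integer combinations $\beta_0,\ldots,\beta_{d-1}$ of LRS terms are needed.

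The precision required to read off these valuations is polynomially bounded. By Item 2 of Lemma~\ref{lem:min_coef} we have $V(h_{z,r}) \le V(F_\ell) + r(d-1) =: V_{\max} = \poly(\|\LRS{u}\|,r)$, and $V(F_\ell)$ itself (which is $\poly(\|\LRS{u}\|)$) is obtained directly from the finite differences of $F_\ell$ at the small arguments $Mi+\ell$. It therefore suffices to compute each $\beta_k$ modulo $p^{V_{\max}+1}$: if the residue is nonzero I read off $v_p(\beta_k)$ exactly, and otherwise $v_p(\beta_k) > V_{\max} \ge V(h_{z,r})$, so $\beta_k$ cannot attain the minimum and may be discarded. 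Comparing the resulting $d$ valuations then yields $V(h_{z,r})$ and $j(h_{z,r})$.

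The one genuine obstacle is that the indices $Mp^r i + Mz + \ell$ are exponentially large in $r$, so the integers $u_{Mp^r i + Mz + \ell}$ have exponentially many bits and cannot be formed explicitly. This is resolved by the two facts already in hand: the valuations of interest are polynomially bounded (Lemma~\ref{lem:min_coef}), so reducing modulo $p^{V_{\max}+1}$ --- a modulus of $\poly(\|\LRS{u}\|,r)$ bits --- loses no information about the minimum; and each term has the closed form $u_n = x^\top A^n y$, so it can be evaluated modulo $p^{V_{\max}+1}$ by repeated squaring of $A$ using $O(\log n) = \poly(\|\LRS{u}\|,r)$ matrix multiplications, each on integers of $\poly(\|\LRS{u}\|,r)$ bits. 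Since there are only $O(d^2)$ such terms across $\beta_0,\ldots,\beta_{d-1}$, the entire computation runs in $\poly(\|\LRS{u}\|,r)$ time, as required.
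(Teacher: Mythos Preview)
Your proof is correct and follows essentially the same approach as the paper: reduce to computing $j(h_{z,r})$ via Corollary~\ref{cor:Weierstrass}, characterise $j(h_{z,r})$ through the $p$-adic valuations of the first $d$ finite differences $(\Delta^k h_{z,r})(0)$ via Theorem~\ref{thm:interpolate}, bound the required precision by Lemma~\ref{lem:min_coef}, and evaluate the needed LRS terms modulo $p^\nu$ by repeated squaring of the companion matrix. Your observation that $h_{z,r}$ is itself the interpolation of the subsequence with step $M'=Mp^r$ (and that the proof of Theorem~\ref{thm:interpolate} only uses $A^{M'}\equiv I\bmod p$, not the size bound on the step) makes explicit the justification the paper leaves implicit when it invokes that theorem for $h_{z,r}$, but the argument is otherwise identical.
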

\begin{proof}
By Corollary \ref{cor:Weierstrass} it suffices to find $j(h_{z,r})$, which by Lemma \ref{lem:min_coef} satisfies $j(h_{z,r}) \leq d-1$. By Proposition \ref{prop:coef_to_mahler} and Lemma \ref{lem:min_coef}, $j(h_{z,r})$ is simply the largest integer $k \in \{0, \dots, d-1\}$ such that $v_p((\Delta^k h_{z,r})(0))$ is minimal (and equal to $V(h_{z,r})$). By Lemma \ref{lem:min_coef} we have that $V(h_{z,r}) < \nu$ for some $\nu = \poly(\|\LRS{u}\|,r)$.\footnote{By inspecting the proof of Lemma \ref{lem:min_coef} we may take $\nu = \log(2^d2^{(Md+\ell+1)\|\LRS{u}\|}d^{Md+\ell + 1}) + rd+1$ for actual computations.} 

Then at least one of $(\Delta^0h_{z,r})(0), (\Delta^1h_{z,r})(0), \dots, (\Delta^{d-1}h_{z,r})(0)$ is non-zero mod $p^{\nu}$, so to determine $j(h_{z,r})$ it suffices to compute these quantities mod $p^\nu$. We have the formula
\begin{align*}
(\Delta^k h_{z,r})(0) = \sum_{n=0}^k (-1)^{k-n} \binom{k}{n} h_{z,r}(n) = \sum_{n=0}^k (-1)^{k-n} \binom{k}{n} u_{M(p^rn + z) + \ell} \, .
\end{align*}
Thus $(\Delta^k h_{z,r})(0) \mod p^\nu$ may be computed in $\poly(\|\LRS{u}\|)$ time if $u_{M(p^rn + z) + \ell} \mod p^\nu$ may be computed in $\poly(\|\LRS{u}\|)$ time, for $0 \leq n \leq d-1$. Writing $\alpha = \begin{pmatrix} 0 & 0 \dots & 1 \end{pmatrix}$, $\beta = \begin{pmatrix} u_{d-1} & u_{d-2} \dots & u_0 \end{pmatrix}^T$ and
\begin{align*}
A = \begin{pmatrix} a_{d-1} & \dots & a_1 & a_0 \\ 1 & \dots & 0 & 0 \\ \vdots & \ddots & \vdots & 0 \\ 0 & \dots  & 1 & 0  \end{pmatrix}
\end{align*}
the companion matrix, we have
\begin{align*}
    u_m = \alpha A^m \beta \, .
\end{align*}
For $0 \leq n \leq d-1$, we have $M(p^r n + z) + \ell \leq q_1(\|\LRS{u}\|)^{q_2(r)}$ for polynomials $q_1,q_2$. We may compute $A^{M(p^r n + z)} \mod p^\nu$ using $O(\log(M(p^rn+z) + \ell)) = O(q_2(r)\log(q_1(\|\LRS{u}\|))) = \poly(\|\LRS{u}\|,r)$ multiplications of matrices mod $p^\nu$, using repeated squaring. Since $\nu = \poly(\|\LRS{u}\|,r)$ and $p = \poly(\|\LRS{u}\|)$, the entries of the matrices mod $p^\nu$ may be written with $\poly(\|\LRS{u}\|,r)$ many binary digits, so each multiplication of matrices may be done in $\poly(\|\LRS{u}\|,r)$ time. 

In conclusion, $(\Delta^k h_{z,r})(0) \mod p^\nu$ may be computed in $\poly(\|\LRS{u}\|)$ time for all $0 \leq k \leq d-1$, from which we can immediately find $j(h_{z,r})$ and conclude. 
\end{proof}
\begin{corollary} \label{cor:fast_discs}
For each $0 \leq \ell \leq M-1$, for integer $N$ written in binary, we can take $R = O(\|N\|)$ such that $Mp^R + \ell \geq N$ and we may find all discs $\overline D(z,R)$ containing a zero $x_0 \in \O_p$ of $F_\ell$ in $\poly(\|\LRS{u}\|,\|N\|)$ time.
\end{corollary}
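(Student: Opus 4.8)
The plan is to fix a suitable $R$ and then explore the discs $\overline D(z,r)$ level by level via a refinement search, testing each disc with Proposition \ref{prop:fast_disc} and using the bound $j(F_\ell)\le d-1$ to keep the search tree narrow. First I would choose $R$: since $p\ge 2$ and $M\ge 1$, taking $R=\lceil\log_p N\rceil$ gives $Mp^R+\ell\ge p^R\ge N$, while $R\le \log_p N + 1\le \|N\|+1=O(\|N\|)$, as required.

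Next, the search itself. The starting observation is that for each fixed $r$ the discs $\overline D(z,r)$, $0\le z\le p^r-1$, partition $\O_p$: writing $x-z=p^r w$ with $w\in\O_p$ shows that $x\in\overline D(z+p^ra,\,r+1)$ if and only if $w\equiv a\pmod p$, so the $p$ discs $\overline D(z+p^ra,\,r+1)$ for $0\le a\le p-1$ partition $\overline D(z,r)$. Consequently each disc at level $r+1$ is contained in a unique disc at level $r$, and a level-$(r+1)$ disc can contain a zero of $F_\ell$ only if its parent does. Starting from $\overline D(0,0)=\O_p$ and repeatedly refining only those discs already known to contain a zero---testing each child $\overline D(z+p^ra,\,r+1)$ by computing $j(h_{z+p^ra,\,r+1})$ through Corollary \ref{cor:Weierstrass} and Proposition \ref{prop:fast_disc}---the search produces, at level $R$, exactly the discs $\overline D(z,R)$ containing a zero of $F_\ell$. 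Completeness is immediate from the parent-containment property: no branch leading to a zero is ever pruned.

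The crux is bounding the width of this tree. At each level $r$ the discs $\overline D(z,r)$ are pairwise disjoint, so every zero of $F_\ell$ in $\O_p$ lies in at most one of them; since $F_\ell$ has at most $j(F_\ell)\le d-1$ zeros in $\O_p$ by Theorem \ref{thm:Weierstrass} and Theorem \ref{thm:interpolate}, at most $d-1$ discs per level contain a zero. Hence passing from level $r$ to level $r+1$ requires refining at most $d-1$ discs, i.e. at most $(d-1)p$ child tests, and over the $R$ levels we perform at most $R(d-1)p$ tests in total. Each test is a disc computation at level at most $R$, costing $\poly(\|\LRS{u}\|,R)$ time by Proposition \ref{prop:fast_disc}. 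As $d$ is fixed, $p=\poly(\|\LRS{u}\|)$ and $R=O(\|N\|)$, the entire procedure runs in $\poly(\|\LRS{u}\|,\|N\|)$ time.

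The main obstacle is precisely this branching: refining blindly down to level $R$ could in principle generate $p^R$ discs, which is exponential in $\|N\|$. The decisive structural input that averts this is the uniform bound of $d-1$ on the number of $p$-adic zeros of $F_\ell$ (via $j(F_\ell)\le d-1$), which caps the number of live discs at \emph{every} level independently of $r$ and is what makes the greedy level-by-level exploration polynomial.
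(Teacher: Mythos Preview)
Your argument is correct and matches the paper's proof essentially line for line: same choice of $R=\lceil\log_p N\rceil$, same level-by-level refinement using Proposition~\ref{prop:fast_disc}, and the same width bound of $d-1$ live discs per level coming from $j(F_\ell)\le d-1$, yielding $O((d-1)pR)$ disc tests in total. One small inaccuracy: the discs $\overline D(z,r)$ for $0\le z\le p^r-1$ do not partition $\O_p$ (only $\Z_p$), but your argument only needs that they are pairwise disjoint and that each child is contained in its parent, both of which hold, so the bound and completeness are unaffected.
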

\begin{proof}
Firstly, note that we can take $R = \left\lceil \frac{\log N}{\log p} \right\rceil$ so $R = O(\|N\|)$. By Proposition \ref{prop:fast_disc}, for each integer $1 \leq r \leq R$ and each integer $0 \leq z \leq p^r-1$ it takes $\poly(\|\LRS{u}\|,r) = \poly(\|\LRS{u}\|,\|N\|)$ time to determine whether $\overline{D}(z,r)$ contains any zeros of $F_\ell$. Consider the following routine:
\begin{enumerate}
    \item Determine whether $F_\ell$ has any zeros in $\overline{D}(0,0) = \O_p$.
    \item Given a disc $\overline{D}(z,r)$ that has been determined to have zeros of $F_\ell$, for each integer $a = 0,1,\dots ,p-1$, determine whether $\overline{D}(z+p^ra,r+1)$ has any zeros of $F_\ell$. 
    \item Repeat step 2 until $r = R$, output all discs $\overline D(z,R)$ found and then stop.
\end{enumerate}
This routine inductively finds all the discs $\overline D(z,R)$ containing zeros of $F_\ell$ for $0 \leq z \leq p^R-1$. By Theorem \ref{thm:interpolate}, $F_\ell$ has at most $d-1$ zeros, so we end up with at most $d-1$ of these discs. Each such disc $\overline D(z,R)$ takes at most $pR = \poly(\|\LRS{u}\|,\|N\|)$ operations of checking particular discs $\overline D(z',r)$ for zeros, and each operation takes $\poly(\|\LRS{u}\|,\|N\|)$ time, so the entire routine takes $\poly(\|\LRS{u}\|,\|N\|)$ time.
\end{proof}
We are now ready to prove our main result.
\SkolemEqSLP*
\begin{proof}
We can find $p,M = \poly(\|\LRS{u}\|)$ in $\poly(\|\LRS{u}\|)$-time satisfying the conditions of Theorem \ref{thm:interpolate} by Lemma \ref{lem:prime_and_M}. For each subsequence $u_{Mn+\ell}$ for $0 \leq \ell \leq M-1$ we check whether $u_{Mn+\ell}$ is identically zero by checking directly whether $u_{Mn+\ell} = 0$ for $n = 0,1,\dots,d-1$. This takes $\poly(\|\LRS{u}\|)$ time as $Mn+\ell = \poly(\|\LRS{u}\|)$ for such values of $n$. For each $\ell$ such that $u_{Mn+\ell}$ is not identically zero we analyse its $p$-adic interpolation $F_\ell : \Z_p \to \Z_p$.

By Corollary \ref{cor:fast_discs}, for each $0 \leq \ell \leq M-1$ and for $R = O(\|N\|)$ such that $Mp^R + \ell \geq N$ we may find all discs $\overline D(z,R)$ for integers $0 \leq z \leq p^R-1$ containing a zero of $F_\ell$. By construction of $R$, the only positive integer inside each disc $\overline{D}(z,R)$ possibly corresponding to an integer zero $0 \leq n \leq N$ of $\LRS{u}$ is $z$ itself. Therefore, for each such disc we need only check whether $u_{Mz+\ell} = 0$.

Let the collection of all candidate integer zeros be $n_1,\dots, n_m$. Note each $n_i = Mz_i + \ell$ for some $0 \leq \ell \leq M-1$ and $0 \leq z_i \leq p^R$. By Theorem \ref{thm:interpolate}, each $F_\ell$ has at most $d-1$ zeros, so $m \leq (d-1)M = \poly(\|\LRS{u}\|)$. However, we cannot check whether $u_{n_i} = 0$ directly since in general $n_i = 2^{\poly(\log\|\LRS{u}\|,\|N\|)}$, so one more trick is required. Consider the product $U = u_{n_1} u_{n_2} \dots u_{n_m}$. Since we have
\begin{align*}
u_{n_i} = \alpha A^{n_i} \beta
\end{align*}
for certain vectors $\alpha,\beta$ and the companion matrix $A$, each $u_{n_i}$ may be written as the output of a $\poly(\|\LRS{u}\|,\|N\|)$-sized circuit, using repeated squaring. Therefore, since $m = \poly(\|\LRS{u}\|)$, we may write $U$ as the output of a $\poly(\|\LRS{u}\|,\|N\|)$-sized circuit. The LRS $\LRS{u}$ has a zero $0 \leq n \leq N$ if and only if $U = 0$, and the problem of checking whether $U = 0$ is in $\mathsf{EqSLP}$.
\end{proof}
Since the described algorithm identifies all possible candidates for zeros $0 \leq n \leq N$ of $\LRS{u}$, we can improve the complexity of the function problem version of the Bounded Skolem Problem. 
\begin{theorem} \label{thm:function_prob}
Let $d\in \mathbb N$.
Given $\LRS{u} \in \mathrm{LRS}(d)$ and $N\in\mathbb N$, the problem of computing the set of all $0\leq n \leq N$ with $u_n=0$ (represented as the union of a finite set and finitely many arithmetic progressions) is in 
$\mathsf{FP}^{\mathsf{EqSLP}} = \mathsf{FP}^{\mathsf{RP}}$.
\end{theorem}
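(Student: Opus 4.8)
The plan is to observe that the algorithm underlying Theorem~\ref{thm:Skolem_EqSLP} already locates \emph{all} candidate zeros, and to refine only its final step so as to output the full zero set rather than a single yes/no answer. First I would run the preprocessing of that proof: using Lemma~\ref{lem:prime_and_M} find $p, M = \poly(\|\LRS{u}\|)$ and split $\LRS{u}$ into the subsequences $u_{Mn+\ell}$, $0 \leq \ell \leq M-1$, each interpolated by an analytic $F_\ell : \Z_p \to \Z_p$ as in Theorem~\ref{thm:interpolate}. The zero set of $\LRS{u}$ in $\{0,\ldots,N\}$ is the disjoint union over $\ell$ of the integer zeros of $u_{Mn+\ell}$ in range, so it suffices to compute each piece and take the union.

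The key structural dichotomy is between those $\ell$ for which $F_\ell$ vanishes identically and those for which it does not, and this exactly separates the arithmetic-progression part of the output from the finite part. For each $\ell$ I would first test whether $F_\ell \equiv 0$. Since the Mahler coefficients satisfy the order-$d$ Cayley--Hamilton recurrence~\eqref{eq:RECUR} (so that $\beta_{\ell,k}$ for $k \geq d$ is determined by $\beta_{\ell,0},\ldots,\beta_{\ell,d-1}$) and since, for $n \leq d-1$, the value $F_\ell(n)$ depends on these coefficients through a unit-triangular change of basis, one has $F_\ell \equiv 0$ iff $F_\ell(0) = \cdots = F_\ell(d-1) = 0$, i.e. iff $u_\ell = u_{M+\ell} = \cdots = u_{(d-1)M+\ell} = 0$. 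These terms have index $Mn+\ell = \poly(\|\LRS{u}\|)$ and hence polynomial bit length, so the test runs in polynomial time with no oracle call. When $F_\ell \equiv 0$, every $u_{Mn+\ell}$ with $0 \leq Mn+\ell \leq N$ is zero, contributing the arithmetic progression with first term $\ell$, common difference $M$, and largest element $\leq N$, which I would output directly; each such progression is described by $\poly(\|\LRS{u}\|,\|N\|)$ bits and there are at most $M$ of them.

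For each remaining $\ell$ (with $F_\ell \not\equiv 0$) I would invoke Corollary~\ref{cor:fast_discs} to compute, in $\poly(\|\LRS{u}\|,\|N\|)$ time, all discs $\overline D(z,R)$ containing a zero of $F_\ell$; by Theorem~\ref{thm:interpolate} there are at most $d-1$ of them, and by the choice of $R$ each contributes the single candidate integer $Mz+\ell$ in range. Collecting over all such $\ell$ yields $m \leq (d-1)M = \poly(\|\LRS{u}\|)$ candidate indices $n_1,\ldots,n_m$, each of bit length $\poly(\|N\|)$. Where the decision version tests their product for zeroness with a single query, here I must identify \emph{which} candidates are genuine, so I would instead build for each $n_i$ the $\poly(\|\LRS{u}\|,\|N\|)$-sized arithmetic circuit computing $u_{n_i} = \alpha A^{n_i}\beta$ by iterated squaring of the companion matrix, and issue one $\mathsf{EqSLP}$ query per circuit. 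The $n_i$ returning zero form the finite part of the output. Altogether the procedure makes $\poly(\|\LRS{u}\|)$ oracle queries on $\poly(\|\LRS{u}\|,\|N\|)$-sized inputs plus $\poly(\|\LRS{u}\|,\|N\|)$ deterministic work, placing the problem in $\mathsf{FP}^{\mathsf{EqSLP}}$, which equals $\mathsf{FP}^{\mathsf{RP}}$ since $\mathsf{EqSLP} \in \mathsf{coRP}$.

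The main obstacle here is conceptual rather than computational: recognising that the Skolem--Mahler--Lech decomposition of the zero set into finitely many arithmetic progressions plus a finite set is realised \emph{precisely} by the interpolation split, with the progressions coming from the identically-zero $F_\ell$ and the finite set from the at-most-$(d-1)$ zeros of each non-vanishing $F_\ell$, together with the observation that identical vanishing of $F_\ell$ is detectable by examining only the first $d$ terms of the subsequence. This last point is what keeps the progression-detecting step oracle-free and polynomial, so that oracle calls are needed only to adjudicate the polynomially many finite candidates.
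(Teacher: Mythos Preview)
Your proposal is correct and follows essentially the same approach as the paper: the paper's proof of this theorem is a one-liner that says to repeat the proof of Theorem~\ref{thm:Skolem_EqSLP} but query the $\mathsf{EqSLP}$ oracle once per candidate zero rather than once on the product, which is exactly what you do (with somewhat more detail on why the identically-zero test for each $F_\ell$ only needs the first $d$ terms).
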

\begin{proof}
We may follow the same exact proof as for Theorem \ref{thm:Skolem_EqSLP}, except using an $\mathsf{EqSLP}$ oracle to decide whether $u_n = 0$ for each candidate zero $n$.
\end{proof}
\begin{corollary}
The Skolem Problem for LRS of order at most 4 is disjunctively Turing reducible to $\mathsf{EqSLP}$ and hence lies in $\mathsf{coRP}$. The function problem of determining all zeros of an LRS of order at most 4 is in $\mathsf{FP}^{\mathsf{EqSLP}} = \mathsf{FP}^{\mathsf{RP}}$. 
\end{corollary}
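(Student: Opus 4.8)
The plan is to compose the known polynomial-time reduction from the (unrestricted) Skolem Problem on order-$\leq 4$ LRS to the Bounded Skolem Problem with our Theorem~\ref{thm:Skolem_EqSLP}. The one external ingredient is the effective threshold of Mignotte~\cite{Mignotte_distance} and Vereshchagin~\cite{Vereshchagin}, quantified in \cite[Appendix C]{Orbit_problem_2016}: there is a bound $N$ of magnitude $2^{\poly(\|\LRS{u}\|)}$, hence with $\|N\| = \poly(\|\LRS{u}\|)$, such that a non-degenerate LRS of order at most $4$ has a zero if and only if it has a zero in $\{0,\ldots,N\}$. This is the only place where the restriction to order $4$ is used, and it is the crux of the argument; no analogous effective threshold is known at higher orders.

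First I would reduce to the non-degenerate case. Writing $\LRS{u}$ as the interleaving of the subsequences $u_{Ln+r}$ for $0\leq r<L$, where $L$ is bounded in terms of the order alone (for order $\leq 4$ a fixed constant suffices to make every subsequence non-degenerate, since any root-of-unity ratio of characteristic roots has order bounded in terms of the degree), each subsequence is again an LRS of order at most $4$ whose defining data can be computed in polynomial time, as its characteristic roots lie among the $L$-th powers of those of $\LRS{u}$. Then $\LRS{u}$ has a zero if and only if some subsequence $u_{Ln+r}$ is identically zero---which is detected by checking its first four terms---or some non-degenerate subsequence has a zero, which by the threshold above must occur in $\{0,\ldots,N\}$.

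For each subsequence I would invoke Theorem~\ref{thm:Skolem_EqSLP} with the bound $N$. Because $\|N\| = \poly(\|\LRS{u}\|)$ and $L = O(1)$, this yields, for each $r$, a collection of polynomially many candidate terms together with a disjunctive Turing reduction to $\mathsf{EqSLP}$; taking the product of all these candidate terms over all $r$ produces a single arithmetic circuit whose zeroness decides whether $\LRS{u}$ has a zero in the relevant interval, which is an instance of $\mathsf{EqSLP}\in\mathsf{coRP}$. Combining this with the trivially checkable identically-zero test shows that the Skolem Problem on order-$\leq 4$ LRS is disjunctively Turing reducible to $\mathsf{EqSLP}$ and hence lies in $\mathsf{coRP}$.

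For the function problem, I would instead apply Theorem~\ref{thm:function_prob} to each non-degenerate subsequence to compute its finite set of zeros in $\{0,\ldots,N\}$, translate these back to zeros $Ln+r$ of $\LRS{u}$, and adjoin the full arithmetic progression $\{Ln+r : n\in\N\}$ for every subsequence that is identically zero. The union is a finite set together with finitely many arithmetic progressions, obtained with polynomially many calls to an $\mathsf{EqSLP}$ oracle, placing the problem in $\mathsf{FP}^{\mathsf{EqSLP}}=\mathsf{FP}^{\mathsf{RP}}$. The only genuinely nontrivial step throughout is importing the Mignotte--Vereshchagin bound; the remainder is a routine composition of that reduction with Theorems~\ref{thm:Skolem_EqSLP} and~\ref{thm:function_prob}.
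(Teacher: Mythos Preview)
Your proposal is correct and follows essentially the same approach as the paper: both derive the corollary by combining Theorems~\ref{thm:Skolem_EqSLP} and~\ref{thm:function_prob} with the exponential Mignotte--Vereshchagin threshold from \cite[Appendix~C]{Orbit_problem_2016}, after passing to a constant-length interleaving into non-degenerate subsequences. The only difference is organisational: the paper uses the non-degenerate decomposition solely to establish a single bound $N = 2^{\poly(\|\LRS{u}\|)}$ and then applies Theorem~\ref{thm:Skolem_EqSLP} once to $\LRS{u}$ itself, whereas you apply it separately to each subsequence; and the paper is slightly more explicit in tracking that the exponential-polynomial size $\|\LRS{v}^{(r)}\|'$ of each subsequence remains $\poly(\|\LRS{u}\|)$, a step you assert but do not spell out.
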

\begin{proof}
Follows from Theorems \ref{thm:Skolem_EqSLP} and \ref{thm:function_prob} and the fact that there is an upper bound $N = 2^{\poly(\|\LRS{u}\|)}$ on the size of the largest zero lying outside an arithmetic progression of zeros. We deduce this from \cite[Appendix C]{Orbit_problem_2016} which states that a \emph{non-degenerate} sequence $\LRS{u}$ of order at most 4 has an upper bound $N = 2^{O(\|\LRS{u}\|')}$ on the size of the largest zero, where $\|\LRS{u}\|'$ is the size of a description of the exponential-polynomial representation of $\LRS{u}$ in binary. That is, if
\begin{align*}
u_n = \sum_{i=1}^s P_i(n)\lambda_i^n
\end{align*}
for polynomials $P_i$ with algebraic coefficients and algebraic numbers $\lambda_i$, then $\|\LRS{u}\|' = \sum_i \|P_i\| + \|\lambda_i\|$. The description of the coefficients of $P_i$ and $\lambda_i$ are the standard ways to represent an algebraic number by its minimal polynomial and an appropriate approximation, see \cite{Orbit_problem_2016} for more details. It is noted there also that there is an integer $L(d)$ depending only on the order $d$ of the LRS such that the subsequences $u_{L(d)n+r}$ are identically zero or non-degenerate. These subsequences have exponential polynomial representation
\begin{align*}
u_{L(d)n+r} = \sum_{i=1}^s P_i(L(d)n+r) \lambda_i^r \lambda_i^{L(d)n} 
\end{align*}
and by \cite[Lemma A.1]{Orbit_problem_2016} we have $\|Q_i\|,\|\lambda_i^{L(d)}\| = \poly(\|\LRS{u}\|')$ where $Q_i(n) = P_i(L(d)n+r)$, so each subsequence $v_n^{(r)} = u_{L(d)n+r}$ has $\|v^{(r)}\|' = \poly(\|\LRS{u}\|')$. Thus, since $\|\lambda_i\|,\|P_i\| = \poly(\|\LRS{u}\|)$ (e.g. see \cite[Lemma 6]{akshay2020}) we have $\|\LRS{u}\|' = \poly(\|\LRS{u}\|)$. By \cite{Orbit_problem_2016} there's a $N_r = 2^{O(\|\LRS{v}^{(r)}\|'}$ for each not-identically-zero $\LRS{v}^{(r)}$ so we may take $N = \underset{0\leq r \leq L(d)-1} \max N_r = 2^{\poly(\|\LRS{u}\|)}$.
\end{proof}
In fact, the Skolem Problem is known to be decidable for a wider class of LRS\@. Named after the authors of the papers \cite{Mignotte_distance,Vereshchagin} in which decidability is established, the \emph{MSTV class} is the class of LRS with at most 3 dominant roots with respect to an Archimedean absolute value or at most 2 dominant roots with respect to a non-Archimedean absolute value. See also the exposition of \cite{bilu2025} on these results. Let $\mathrm{MSTV}(d)$ denote the class of LRS of order $d$ lying in the MSTV class. We claim that by carrying the constants through more precisely in the proof of decidability, there is a bound $N = 2^{\poly(\|\LRS{u}\|)}$ on the size of the largest zero of any non-degenerate LRS in $\mathrm{MSTV}(d)$, and hence for any integer $d \geq 1$ the Skolem Problem for the class $\mathrm{MSTV}(d)$ is disjunctively Turing reducible to $\mathsf{EqSLP}$ and therefore in $\mathsf{coRP}$.

Finally, one more class of LRS for which an exponential upper bound on the size of the largest zero was shown is the class of LRS whose characteristic roots are all 
roots of rational numbers~\cite{akshay2020}.  For this class we likewise obtain a \textsf{coRP} bound to decide the Skolem Problem for LRS of every fixed order.

\subsection{Pseudocode}
For completeness, we give the full pseudocode for the algorithm to find candidate zeros of $\LRS{u}$ previously described. The algorithm takes as input an LRS $\LRS{u}$ and integer $N$ and finds in polynomial-time a description of a set containing all possible integer zeros of $\LRS{u}$ in the interval $[0,N]$. In particular this set is comprised of finitely many arithmetic progressions $(Mn+\ell)_{n \in \N}$ for which $u_{Mn+\ell}=0$, and polynomially-many exceptional integers $0 \leq n \leq N$ lying outside those arithmetic progressions. To solve the Bounded Skolem Problem, one simply has to check whether the product $u_{n_1}\dots u_{n_s} = 0$ where $n_1, \dots, n_s$ are all the exceptional candidate zeros. To solve the function problem version, one checks whether $u_{n_j} = 0 $ for each $j=1,\dots,s$.

\algrenewcommand\algorithmicrequire{\textbf{Input:}}
\algrenewcommand\algorithmicensure{\textbf{Output:}}

\begin{algorithm}
\caption{Polynomial-time algorithm for finding all arithmetic progressions of zeros and polynomially-many extra candidate integer zeros of an LRS}
\label{alg:candidates}
\begin{algorithmic}[1]

  \Require LRS $\LRS{u}$ satisfying \eqref{eqn:LRS_def} of order $d$, integer $N$
  \Ensure List of pairs $(M,\ell)$ for which $u_{Mn+\ell}$ is identically zero, and a list of polynomially many exceptional candidate zeros $0 \le n \le N$

  \Procedure{ZeroSearch}{$F,d,N,\nu,M,\ell,p$}
    \State $\texttt{discs} \gets$ empty list
    \State $\texttt{zeros} \gets$ empty list
    \State Compute $(\Delta^k F)(0) \bmod p^\nu$ for each $0 \le k \le d-1$
    \State From these, compute $\min\{v_p((\Delta^k F)(0)),\nu\}$ for each $0 \le k \le d-1$
    \State $j(F) \gets \text{largest } k \text{ with } v_p((\Delta^k F)(0)) = \underset{0 \le j \le d-1}{\min} v_p((\Delta^j F)(0))$
    \If{$j(F) > 0$}
      \State append $\overline D(0,0)$ to $\texttt{discs}$
    \EndIf
    \While{$\texttt{discs} \neq \emptyset$}
      \ForAll{$\overline D(z,r) \in \texttt{discs}$}
        \ForAll{$a \in \{0,1,\dots,p-1\}$}
          \State $z' \gets z + p^r a$
          \State Compute $(\Delta^k h_{z',r+1})(0) \bmod p^\nu$ for each $0 \le k \le d-1$
          \State From these, compute $\min\{v_p((\Delta^k h_{z',r+1})(0)),\nu\}$ for each $0 \le k \le d-1$
          \State $j(h_{z',r+1}) \gets \text{largest } k \text{ with } v_p((\Delta^k h_{z',r+1})(0)) = \underset{0 \le j \le d-1}{\min} v_p((\Delta^j h_{z',r+1})(0))$
          \If{$j(h_{z',r+1}) \geq 1$}
            \If{$p^{r+1}M + \ell \ge N$}
              \State append $z'$ to $\texttt{zeros}$
            \Else
              \State append $\overline D(z',r+1)$ to $\texttt{discs}$
            \EndIf
          \EndIf
        \EndFor
        \State remove $\overline D(z,r)$ from $\texttt{discs}$
      \EndFor
    \EndWhile
    \State \Return $\texttt{zeros}$
  \EndProcedure

  \State $\texttt{candidates} \gets$ empty list
  \State $\texttt{APs} \gets$ empty list
  \State Compute the smallest prime $p$ with $p \ge d+2$ and $p \nmid a_0$
  \State $A \gets$ companion matrix of $\LRS{u}$
  \State $M \gets$ the smallest positive integer such that $A^M \equiv I \pmod p$
  \State $\nu \gets \log\!\big(2^d\, 2^{(Md+k+1)\|\LRS{u}\|}\, d^{Mn+k+1}\big) + rd + 1$
  \ForAll{$\ell \in \{0,1,\dots,M-1\}$}
    \If{$u_{Mn+\ell} = 0$ for every $n = 0,1,\dots,d-1$}
      \State append $(M,\ell)$ to $\texttt{APs}$
    \Else
      \State $F_\ell \gets \text{the function } \big(n \mapsto u_{Mn+\ell}\big)$ on $\mathbb{N}$
      \ForAll{$z \in \Call{ZeroSearch}{F_\ell,d,N,\nu,M,\ell,p}$}
        \State append $Mz+\ell$ to $\texttt{candidates}$
      \EndFor
    \EndIf
  \EndFor
  \State \textbf{output } $\texttt{candidates}$, $\texttt{APs}$

\end{algorithmic}
\end{algorithm}

\newpage
\subsection{An example}
Here we show an example of how the algorithm works for a particular LRS\@. Note this is only supposed to be illustrative so $\nu$ is chosen smaller than in Algorithm \ref{alg:candidates}. Let $\LRS{u}$ be defined by
\begin{align*}
    u_{n+3} = 2u_{n+2} -3u_{n+1} + u_n
\end{align*}
and initial values $u_0 = -1, u_1 = 1, u_2 = 7$. We may take prime $p = 5$, and we can see that the companion matrix $A$ satisfies $A^8 \equiv I \mod p$, thus we can take $M=8$. Suppose that we are also given upper bound $N = 200$. For the benefit of presentation, we take $\nu = 5$.
Following the algorithm, consider the subsequences $u_{8n},u_{8n+1},\dots,u_{8n+7}$ and their corresponding $p$-adic analytic interpolations $F_0,F_1,\dots,F_7$. It is easy to check that none of the subsequences are identically zero.
To compute the number of zeros of $F_\ell$ in $\overline D(0,0)$, we compute each of $(\Delta^0F_\ell)(0),(\Delta^1 F_\ell)(0), (\Delta^2F_\ell)(0) \mod 5^5$. 

\begin{table}[ht] \label{table:F_ell}
\centering
\begin{tabular}{|r|r|r|r|}
\hline
$\ell$ & $(\Delta^0 F_\ell)(0) $ & $(\Delta^1 F_\ell)(0)$ & $(\Delta^2 F_\ell)(0)$ \\ \hline
0 & 3124 & 80 & 400 \\ \hline
1 & 1 & 130 & 2875 \\ \hline 
2 & 7 & 15 & 25 \\ \hline 
3 & 10 & 2845 & 1200 \\ \hline
4 & 0 & 2650 & 2075 \\ \hline
5 & 3102 & 3030 & 575 \\ \hline
6 & 3089 & 955 & 2375 \\ \hline 
7 & 3122 & 1720 & 1975 \\ \hline
\end{tabular}
\caption{The values of $(\Delta^k F)(0) \mod 5^5$ for $0 \leq \ell \leq 7$}
\end{table}

\begin{table}[ht]
\centering
\begin{tabular}{|r?r|r?r|r|}
\hline
$z$ & $(\Delta^0 h_{z,1})(0) $ & $(\Delta^1 h_{z,1})(0)$ & $(\Delta^0 h_{2_5z,2})(0)$ & $(\Delta^1 h_{2+5z,2})(0)$ \\ \hline
0 & 10 & 2475 & 650 & 500  \\ \hline
1 & 2855 & 350 & 2625 & 500 \\ \hline 
2 & 650 & 1975 & 225 & 500 \\ \hline 
3 & 395 & 1100 & 2825 & 500 \\ \hline
4 & 2215 & 850 & 1050 & 500 \\ \hline
\end{tabular}
\caption{The values of $(\Delta^k h_{z,1})(0) \mod 5^5$ for $0 \leq z \leq 4$ where $h_{z,1}(x) = F_3(px+z)$}
\label{table:F3}
\end{table}
\begin{table}[ht]
\centering
\resizebox{\columnwidth}{!}{%
\begin{tabular}{|r?r|r|r?r|r?r|r|}
\hline
$z$ & $(\Delta^0 h_{z,1})(0) $ & $(\Delta^1 h_{z,1})(0)$ & $(\Delta^2 h_{z,1})(0)$ & $(\Delta^0 h_{5z,2})(0)$ & $(\Delta^1 h_{5z,2})(0)$ & $(\Delta^0 h_{2+5z,2})(0)$ & $(\Delta^1 h_{2+5z,2})(0)$ \\ \hline
0 & 0 & 2750 & 1875 & 0 & 1250 & 1125 & 1875 \\ \hline
1 & 2650 & 625 & 1875 & 2750 & 1250 & 2750 & 1875 \\ \hline 
2 & 1125 & 1625 & 1875 & 1125 & 1250 & 0 & 1875 \\ \hline 
3 & 2300 & 2625 & 1875 & 1375 & 1250 & 2250 & 1875 \\ \hline
4 & 1800 & 500 & 1875 & 375 & 1250 & 125 & 1875 \\ \hline
\end{tabular}
}
\caption{The values of $(\Delta^k h_{z,1})(0) \mod 5^5$ for $0 \leq z \leq 4$ where $h_{z,1}(x) = F_4(px+z)$}
\label{table:F4}
\end{table}

For $\ell = 0,1,2,5,6,7$ we have $v_5(\Delta^0 F_\ell)(0) < v_5(\Delta^1 F_\ell )(0),v_5(\Delta^2 F_\ell)(0)$ so $j(F_\ell) = 0$, therefore $F_\ell$ has no zeros in $\overline D(0,0)$ for these values of $\ell$. For $\ell = 3,4$ we see that $j(F_3) = 1$ and $j(F_4) = 2$ so $F_3,F_4$ have 1 and 2 zeros in $\overline D(0,0)$ respectively. 

Next, we find the zeros of $F_3$. Since $j(F_3) =1$, only $(\Delta^0 h_{z,r})(0),(\Delta^1 h_{z,r})(0)$ need to be computed to determine where the zero of $F_3$ lies, so for presentation $(\Delta^2 h_{z,r})(0)$ has been suppressed from the tables. Table \ref{table:F3} shows the only $z$ for which $v_p((\Delta^0 h_{z,1})(0)) \geq v_p((\Delta^1 h_{z,1})(0))$ is $z = 2$ so the zero of $F_3$ lies in $\overline{D}(1,2)$. Similarly, $v_p((\Delta^0 h_{7,2})(0)) \geq v_p((\Delta^1 h_{7,2})(0))$ so the zero of $F_3$ lies in $\overline D(7,2)$. Since $N=200$ and $Mp^2 + \ell = 203 \geq 200$, there is only one possible candidate integer zero of $\LRS{u}$ corresponding to this; we need only check $u_{8\cdot 7 + 3} = u_{59}$.

We repeat the same process with $F_4$. Analysing the values of $(\Delta^k h_{z,r})(0)$ given in Table \ref{table:F4} shows $\overline D(0,1) \supset \overline D(0,2)$ and $\, \overline D(2,1) \supset \overline D(12,2)$ contain one zero of $F_4$ each. Again, since $Mp^2 + \ell = 204 \geq 200$, we need only check $u_{8\cdot 0 + 4} = u_4$ and $u_{8\cdot 12 + 4} = u_{100}$.

At this point, to decide the Skolem Problem we use an algorithm for $\mathsf{EqSLP}$ to decide whether $u_{4}u_{59}u_{100} = 0$. To solve the function problem of determining all zeros, call an $\mathsf{EqSLP}$ oracle to check whether each $u_4,\, u_{59}, \, u_{100} = 0$. It turns out in this case that the only zero is $u_4 = 0$.
\section{Conclusion}
We have exhibited an algorithm that shows that for every $d\in \mathbb N$
the Bounded Skolem Problem lies in $\mathsf{coRP}$ for LRS of order at most $d$. 
As a corollary we showed that the Skolem Problem for LRS of order 4 lies also lies in $\mathsf{coRP}$.
The most natural route to prove decidability of Skolem's Problem is to exhibit an effective upper bound 
for the largest integer zero of a non-degenerate LRS (as is done in the order-4 case).  It is plausible that such a bound has magnitude at most exponential in the description of the LRS---there is no known family of LRS for which the largest zero is super-exponential.  The existence of such an exponential bound would yield a polynomial-time reduction of Skolem's Problem to the Bounded Skolem Problem.

\section*{Acknowledgements.}
Jo\"{e}l Ouaknine is also affiliated with Keble College, Oxford as \href{https://emmy.network}{emmy.network} Fellow, and supported by ERC grant DynAMiCs (101167561) and DFG grant 389792660 as part of \href{https://perspicuous-computing.science}{TRR~248}. Piotr Bacik and James Worrell were supported by EPSRC grant EP/X033813/1.

\bibliographystyle{siamplain}
\bibliography{main.bib}
\end{document}